\theoremstyle{plain}
\newtheorem{proposition}{Proposition}[section]
\theoremstyle{definition}
\theoremstyle{remark}
\newtheorem{remark}{Remark}[section]
\DeclareMathOperator{\1}{\mathds{1}}
\title{Pricing Reliability Options under different electricity prices' regimes}
\author{Luisa Andreis\footnote{WIAS-Weierstrass Institute, Berlin; andreis@wias-berlin.de }, Maria Flora\footnote{Department of Economics, University of Verona; maria.flora@univr.it}, Fulvio Fontini\footnote{Department of Economics and Management "Marco Fanno", University of Padua, and  Interdepartmental Centre for Energy Economics and Technology "Giorgio Levi-Cases", University of Padua; fulvio.fontini@unipd.it}, Tiziano Vargiolu\footnote{Department of Mathematics "Tullio Levi Civita", University of Padua and  Interdepartmental Centre for Energy Economics and Technology ``Giorgio Levi-Cases", University of Padua; vargiolu@math.unipd.it}}
\begin{document}

\maketitle

\begin{abstract}
Reliability Options are capacity remuneration mechanisms aimed at enhancing security of supply in electricity systems. They can be framed as call options on electricity sold by power producers to System Operators. This paper provides a comprehensive mathematical treatment of Reliability Options. Their value is first derived by means of closed-form pricing formulae, which are obtained under several assumptions about the dynamics of electricity prices and  strike prices. Then, the value of the Reliability Option is simulated under a real-market calibration, using data of the Italian power market. We finally perform sensitivity analyses to highlight the impact of the level and volatility of both power and strike price, of the mean reversion speeds and of the correlation coefficient on the Reliability Options' value.
\end{abstract}

\noindent {\bf Keywords:} Pricing; reliability option; option value; electricity markets.

\newpage

\section{Introduction}
Capacity Remuneration Mechanisms (CRM) have been implemented in several electricity markets worldwide, in order to remunerate explicitly power capacity.\footnote{See \cite{CF}, ch. 22 and 23, for an introduction and an analysis of Capacity Remuneration Mechanisms.} Among these, the Reliability Option (RO) mechanism recognizes the option nature of the investments in power capacity  
and creates a market for such an option. ROs, firstly proposed in \cite{Bid,VRP}, have been implemented in Colombia (Firm Energy Obligations \cite{Cramton and Stoft}), in NE-ISO (Forward Capacity Market \cite{FERC}) and in Ireland \cite{SEMc,SEMb,SEMa} and are about to be implemented in Italy \cite{MFRB,Ternaa,Ternab,Ternad}. They are tools to commercialize, through a financial product, the possibility, given by generation capacity,  
of providing security of supply by producing electricity. They give their holder, i.e. the System Operator (SO), which acquires them in a competitive setting, the right to call the generation capacity to produce power, and to receive the positive difference between the electricity price that effectively occurs in the market and a pre-defined price. Such a pre-defined price corresponds to the strike price of the option, and it is set to represent the value that power has at the specific level for which load is not shed, i.e., it is the highest system marginal price compatible with load provision with no load shedding. 

In this paper, we evaluate ROs following the financial approach\footnote{This is a standard approach to price financial derivatives, see for instance \cite{Hull}.}, which requires to identify the stochastic property of the asset under evaluation, and to assume that, in a complete market, a continuous hedging between the financial derivative and the underlying asset is possible. At a first glance, this assumption seems quite hard to be met in the electricity sector, given that the underlying asset of the option is electricity, which is not a storable good\footnote{At least as long as storage of electric energy by means of conversion into a different form of energy, such as kinetic energy of water in power dams or as chemical energy in batteries, is limited because of its cost or for technical reasons.}. However, derivatives can be and are indeed written on several underlying assets that are not liquidly traded, such as interest rates or temperatures (see e.g.~\cite[Chapter 15]{Bjork}). What is needed for the application of risk-neutral pricing based on hedging is the existence of liquid assets that are traded and that correlate with the underlying of the derivative, such as forwards. The seminal paper \cite{BesLem} has questioned such an assumption, considering the relationship between derivative (future) prices and spot prices in markets with limited liquidity and risk averse agents. However, we believe that the assumption of limited liquidity was more justified at the beginning of the liberalization process of the power market, while this concern is less justified now, after several years of functioning of liberalized electricity markets. This approach is shared by other scholars, who have evaluated exotic options on electricity, such as spark-spread options (options on the differential between power prices and the heat content of the fuel, \cite{Deng,Hik}), Asian options (options written on average prices, \cite{Clewlow}) or options which are implicit in demand response mechanisms, \cite{Sezgen}.

We formulate different possible assumptions on the dynamics of the stochastic processes on which the RO depends, and estimate the relative RO value. ROs are complex options on power supply which can have different maturities and can be exercised several times at different, and possibly random, strike prices. Therefore, we provide a comprehensive mathematical treatment of all their aspects. 

Though many authors, as seen above, have evaluated various exotic options on electricity,
 to the best of our knowledge our paper is the first one to evaluate ROs under different assumptions on the  electricity price process.\footnote{ \cite{Burger} evaluates, through a Monte Carlo approach, a contract composed of a portfolio of 4344 call options on hourly prices, all with the same strike price. It corresponds to a discrete-time version of the option we consider in Proposition \ref{Prop:GBM}.} 
Several models for electricity prices have been proposed in the literature (see e.g.~\cite{BenKoe,Clewlow,GemRon,Hik,Paraschiv,Veh} and the book \cite{Benth} for a presentation and critical discussion of various models), and it would not be feasible to present RO pricing formulae for each one of these. For this reason, we choose a set of simple and significant ones, and present semi-explicit pricing formulae that have clear economic interpretations.
We first start from the simplest possible assumption about electricity prices and strike prices, increasing then the level of complexity of the RO design, to allow for a mean reverting underlying, for stochastic strike prices and for possibly negative (but bounded from below) electricity prices. Furthermore, we simulate the RO value under different possible assumptions on the parameters, and calibrate the RO parameters against real electricity market data, namely, the Italian Power Exchange ones. The availability of long hourly price time series and the forthcoming introduction of RO in the Italian market both justify the choice.  

The paper is structured as follows. Section \ref{sec:rel_op} describes ROs and presents a general pricing formula under realistic assumptions. Section \ref{sec:pricing} provides semi-explicit solutions to the general pricing formula, for different electricity and strike price models.
We start by defining the arbitrage-free boundaries of RO's evaluation. We then move from the  simplistic model of geometric Brownian motion (GBM) with deterministic strike, to correlated GBMs with stochastic strike, and, by increasing realism on the model, to the case when both electricity and strike prices are seasonal and mean-reverting. For all these models, we present semi-explicit pricing formulae. Finally, we provide some insights for the case of negative prices. In Section \ref{sec:simulation}, we showcase a simulation of the RO evaluation and perform a sensitivity analysis, using data of the Italian Power market for estimates and calibration. Section \ref{sec:conclusion} draws conclusions, while all the proofs of the propositions are in the Appendix.

\section{Reliability options} \label{sec:rel_op}

We start by desribing in general what ROs are. These contracts are sold in an auction, typically once a year, and they aim to deliver electricity with a  given $T_1$-length period in advance (lead time), for a pre-defined  period of delivery, which has length $(T_2-T_1)$. The rules of the RO specify that the capacity provider, the subject who sells the option, must commit to deliver a certain capacity to the subject buying the option, in general the SO. Such a commitment is made effective by prescribing that the seller must offer in the market an amount of electricity equal to the committed capacity and return any positive difference between the reference market price and a previously set strike price $K$. Each RO contract scheme specifies what the reference market is. In a first approximation, the reference market can be a convex combination of different markets, such as the day-ahead and the balancing or real-time ones. In practice, different RO schemes can have different reference markets. For instance, in Ireland, exclusively the day-ahead market is taken as a reference, while in NE-ISO it is the real-time one.
If we call $P$ the day-ahead market price and $P^{(b)}$ the price in the balancing market (or in the real-time market), we can define the reference market price $R$ as the following convex combination
$$ R = \lambda P + (1 - \lambda) P^{(b)}, $$
where, as said,  $\lambda \in [0,1]$  depends on the country: $\lambda = 0$ for ISO New England; $\lambda = 1$ for Colombia and Ireland; $\lambda \in [0,1]$ in the case of Italy (see~\cite{MFRB} for a description of the forthcoming Italian market). 

The strike price is in general determined by taking into account the variable costs of the reference peak technology, that is, the dispatchable technology that would be included in the optimal generation mix with the lowest unitary investment cost. In actual RO markets, the rule for the strike price is communicated to potential sellers of ROs before the auction takes place. Thus, in some implementations it can be treated as a deterministic and constant parameter. However, it is also possible that the strike price changes over time during the life span of the RO. This is a possibility envisaged, for instance, in the forthcoming Italian RO scheme, where it is established that the rule linking the strike price to a reference marginal technology is set before the auction, but the marginal cost of such a technology is computed every given period (a month) during the life span of the RO.\footnote{See \cite{MFRB} and \cite{Ternaa, Ternab, Ternad}.} This implies that the strike price can also be conceived as a stochastic process. 
We shall first derive the RO value starting with the simplest case, and then increase the level of complexity, to derive a general representation of the value of the RO.

\subsection{A simple mathematical model for Reliability Options}

The mathematical modeling of the general RO is quite complex, as many auctions and prices are involved. We simplify it by defining a mathematical model for the case when the reference price is simply the day-ahead price $P$, i.e. $\lambda=1$, as it is in the Colombian or the Irish CRM.\footnote{Moreover, we do not consider congestion in the transmission network, and therefore we implicitly assume that the market for ROs have the same size of the electricity market, namely, that there are no differences between the pricing zones of the electricity and the capacity markets.}
In this way, only one state variable is needed for the reference market price $R$, and it is indeed $P$. 

We start by computing the fair price of a RO, written only on the reference price $P$ and based on a generation capacity, i.e., for a power plant that is already in place. As said at the beginning of this section, the RO is sold in an auction at a certain time, but it becomes active in a subsequent time period. Let us denote by $t = 0$ the auction time and by $[T_1,T_2]$, with $T_1 > 0$, the time period when capacity has to be committed. It is assumed that the power plant will be productive at least until $T_2$. The idea of pricing the RO is to compute the expected operational profits at time $t = 0$ (auction time) of the power plant over the period $[T_1,T_2]$, both in the case when the capacity provider enters a RO scheme, and in the case it does not. The difference between these two operational profits will be the fair price of the RO. 

We work on a filtered probability space $(\Omega, \mathcal{F},\left\{\mathcal{F}_t\right\}_{t\geq0},\mathbb{Q})$ such that the probability measure $\mathbb{Q}$ is the risk-neutral pricing measure used by the market, and the day-ahead electricity price $P = (P_t)_{t\geq0}$ is a $\mathbb{Q}$-semimartingale.  We consider the simple case of a thermal plant, with total capacity $Q>0$,\footnote{$Q$ is to be interpreted as the available capacity of a power plant, as described by {\cite{JoskowTirole}}. In the real-world examples of ROs, available capacity is computed by measuring the average availability of a power plant over a given time span (usually a year) and derating the nominal capacity accordingly (as suggested in the academic literature by \cite{CramtonOckenfelsStoft}, and in practical market implementations in \cite{Ternaa, Ternab} for the Italian scheme, and in \cite{SEMderating} for Ireland). As an example, consider a 100MW plant with a maintenance period of one month per year. Its capacity factor is equal to 0.91; this figure can be used to de-rate the relevant capacity of the plant for the RO, which would amount to 91MW.} that converts a fuel, for example oil, gas or coal, into electricity.
The cost $C = (C_t)_{t\geq0}$ of running the thermal plant summarizes the fuel price, CO$_2$ price, operational and other costs.
The power plant sells the electricity at time $t\geq0$ when it wins the day-ahead auction, i.e. when its bid $b_t$ is less than or equal to $P_t$. 
We adopt the usual simplifications, continuous time instead of hourly granularity and no ramping penalties/constraints. The plant can decide its bid process $b = (b_t)_{t\geq0}$ to maximize its revenues. 

We first evaluate the expected operational profits of the power plant over $[T_1,T_2]$ in the case when a RO scheme is not in place, 
This is the \emph{value of the power plant} $V(T_1,T_2)$ at $t=0$ and it depends on the power plant's income over $[T_1,T_2]$. It can be defined as
\begin{equation} \label{linear}
V(T_1,T_2) = \sup_{b\in\mathcal{B}} \mathbf{E}^\mathbb{Q}\left[ \left. \int_{T_1}^{T_2} e^{- r t } Q \mathds{1}_{b_t \leq P_t} (P_t - C_t) dt \right| \mathcal{F}_{0} \right]\,,
\end{equation}
where $\mathcal{B}$ is the set of adapted processes on $[T_1,T_2]$, $r$ is the instantaneous risk-free rate of return  and $\mathbf{E}^\mathbb{Q}$ is the expectation with respect to $\mathbb{Q}$. 

\begin{remark}
In this setting, we assume that the investor is risk-neutral. Although here we are not evaluating financial assets, but rather incomes coming from industrial activity, this is in line with all the related literature (see e.g. \cite{Deng,McDonaldSiegel,Sezgen}), and is justified by the following financial argument. The underlying assets $P$ and $C$ could be in principle not storable, or even not traded in some markets. However, even in such a situation, the risk-neutral evaluation in Eq.~\eqref{linear} can be applied as long as one can find hedging instruments that can be storable and liquidly traded, and that are correlated with $P$ and $C$: for the mathematical derivation of such a result, see e.g.~\cite[Chapter 15]{Bjork} for vanilla products like call and put options (as we will end up to have), and \cite[Remark 3.6]{CCGV} for structured products like that in Eq.~\eqref{linear} and the subsequent ones\footnote{This is exactly the same argument used to evaluate derivative assets written on non-tradable quantities like interest rates, temperature, etc.}. Here, we indeed have such suitable hedging instruments, i.e. forward contracts on power and fuel (for $P$ and $C$, respectively), which are liquidly traded on financial markets, as they are basically equivalent to any other financial asset up to few days before physical delivery. When physical delivery approaches, in order to maintain the hedging position it is sufficient to liquidate the position on the maturing future(s) and open an equivalent new one on another future with a physical delivery  farther in time. This is a standard practice in energy markets, called ``rolled-over portfolios", see e.g.~\cite{Alexander,ETV} for two applications.
\end{remark}

Going back to Eq.~\eqref{linear}, it is optimal to choose $b$ such that $\mathds{1}_{b_t \leq P_t} = 1$ if and only if $P_t > C_t$, i.e. the optimal bidding process is $b_t = C_t \, \forall \, t\in[T_1,T_2]$. Thus, the final payoff for a thermal plant is
$$ V(T_1,T_2) =\mathbf{E}^\mathbb{Q}\left[\left.  Q  \int_{T_1}^{T_2} e^{- r t}  (P_t - C_t)^+ dt\right| \mathcal{F}_{0} \right]\,. $$
We now consider the case when the thermal plant writes a RO with strike price $K=(K_t)_{t\geq0}$. The plant must now pay back $(P_t - K_t)^+$. Therefore, the value $ V_{ro}(T_1,T_2)$ of the thermal plant with a RO scheme in place is
$$ V_{ro}(T_1,T_2) = \sup_{b\in\mathcal{B}} \mathbf{E}^\mathbb{Q}\left[ \left. \int_{T_1}^{T_2} e^{- r t} Q ( \1_{b_t \leq P_t} (P_t - C_t) - (P_t - K_t)^+)\ dt \right| \mathcal{F}_{0} \right]\,. $$ 
The bidding strategy $b_t = C_t$ is again optimal  for all $ t\in[T_1,T_2]$. Thus,
$$ V_{ro}(T_1,T_2) = V(T_1,T_2) - \mathbf{E}^\mathbb{Q}\left[ \left. \int_{T_1}^{T_2} e^{- r t} Q (P_t - K_t)^+\ dt \right| \mathcal{F}_{0} \right] \,.$$ 
In a risk-neutral world, the value $RO(T_1,T_2)$ of a RO written on the time interval $[T_1,T_2]$ should make the investor indifferent between having the original plant without the RO, and having it with the RO written on it plus the price of the option, i.e.
$ V(T_1,T_2) = V_{ro}(T_1,T_2) + RO(T_1,T_2)\,.
$ 
Therefore, the final result is 
\begin{eqnarray}
RO(T_1,T_2) & = & V(T_1,T_2) - V_{ro}(T_1,T_2) \nonumber \\
& = & \mathbf{E}^\mathbb{Q}\left[ \left. \int_{T_1}^{T_2} e^{- r t} Q (P_t - K_t)^+\ dt \right| \mathcal{F}_{0} \right] \label{form_of_price}
\end{eqnarray} 
Thus, the value of a reliability option issued by a thermal plant is equivalent to the price of an insurance contract against price peaks. Interestingly enough, notice that the operating strategy of the power plants does not change. In electricity markets, it is well known that perfectly competitive markets without CRMs, the so called energy only markets, provide enough incentives to investment, and the same is true for optimally designed CRMs, since the latter simply anticipate ex ante the supermarginal profits that investors would gain in energy only markets. In other words, the amount of remuneration of capacity accruing from perfectly competitive markets for CRMs equals the expected discounted value of the supermarginal profits gained in electricity markets; in a world without market failures, the two levels coincide (see \cite[Chapter 22]{CF}). This is confirmed in our framework: without market power, the value of operating the plant is independent of the form of remuneration of power production, i.e., if revenues accrue ex-ante from the CRM or ex-post from selling electricity in the market.

\section{Pricing of Reliability Options} \label{sec:pricing}

\subsection{Model-free no-arbitrage bounds}

Equation \eqref{form_of_price} already allows us to produce model-free no-arbitrage bounds on the price of the RO. No-arbitrage bounds have been derived by \cite{Deng} for analogous contracts, yet in a different setting. In fact, in \cite{Deng}, it is assumed that a continuum of forward contracts is traded, both for electricity and for the relevant fuel (whose spot price here is $K$), which deliver at any given date $t$. However, many energy markets do not satisfy this assumption, and especially forward contracts on electricity, which guarantee the delivery of power over a period (e.g., $[T_1,T_2]$), rather than on a single date $t$. 
Even if this does not jeopardize the evaluation mechanism developed in Section \ref{sec:rel_op}, as these forwards written on a period are liquid assets that can be used for hedging, the no-arbitrage bounds available in \cite{Deng} cannot be directly applied to our framework, but must be modified. Notice that these model-free bounds do not require any assumption on the electricity price apart from $P$ being bounded from below by a constant price floor $-P^*$, with $P^*\geq 0$. This is consistent with those electricity markets in which negative prices are allowed with a lower bound (as for instance in the German and French markets).

We start from the identity
$$ (P_t - K_t)^+ = (K_t - P_t)^+ + P_t - K_t\,. $$
 Since $0 \leq (K_t - P_t)^+ \leq K_t+P^*$, we have  
$$ P_t - K_t \leq (P_t - K_t)^+ \leq P_t+P^* \,.$$
By multiplying the inequalities by $e^{-rt}$, integrating and taking the expectation, we have that 
$$ Q \mathbf{E}^\mathbb{Q}\left[ \left. \int_{T_1}^{T_2} e^{- r t} (P_t - K_t)\ dt \right| \mathcal{F}_{0} \right] \leq RO(T_1,T_2) \leq Q \mathbf{E}^\mathbb{Q}\left[ \left. \int_{T_1}^{T_2} e^{- r t} (P_t+P^*)\ dt \right| \mathcal{F}_{0} \right] \,.$$
The right-hand side represents the forward price of delivering the quantity $Q$ of electricity over the period $[T_1,T_2]$\footnote{this is alternatively referred to as {\em flow forward} or {\em swap}, see e.g.\cite{Benth}.} with an additional constant $QP^* \frac{e^{-rT_1} - e^{-rT_2}}{r}$, depending on the price floor. We label
$$ F_P(0;T_1,T_2) := \mathbf{E}^\mathbb{Q}\left[ \left. \int_{T_1}^{T_2} e^{- r t}P_t\ dt \right| \mathcal{F}_{0} \right]  $$
the (unitary) forward price. 
Then, since $RO(T_1,T_2) \geq 0$, when $K_t \equiv K$, i.e. with fixed strike, we can rewrite the no-arbitrage relation above as
\begin{equation} \label{NA-fixed}
Q \left( F_P(0;T_1,T_2) - K \frac{e^{-rT_1} - e^{-rT_2}}{r} \right)^+ \leq RO(T_1,T_2) \leq Q F_P(0;T_1,T_2) +QP^* \frac{e^{-rT_1} - e^{-rT_2}}{r}\,.
\end{equation}
Thus, the value of a reliability option written on a total capacity $Q$ over the period $[T_1,T_2]$ lies between the intrinsic value of $Q$ call options on the forward $F_P(0;T_1,T_2)$ and the modified strike $K \frac{e^{-rT_1} - e^{-rT_2}}{r}$, and $Q$ forwards $F_P(0;T_1,T_2)$ adjusted by an additional constant proportional to the price floor $P^*$.

Conversely, when $K$ follows itself a stochastic process, we define
$$ F_K(0;T_1,T_2) := \mathbf{E}^\mathbb{Q}\left[ \left. \int_{T_1}^{T_2} e^{- r t} K_t\ dt \right| \mathcal{F}_{0} \right] \,,$$
and obtain
\begin{equation} \label{NA-variable}
Q \left( F_P(0;T_1,T_2) - F_K(0;T_1,T_2) \right)^+ \leq RO(T_1,T_2) \leq Q F_P(0;T_1,T_2)+QP^* \frac{e^{-rT_1} - e^{-rT_2}}{r} \,.
\end{equation}
Note that, even with a stochastic strike price $K$, the upper bound is unaffected. On the other hand, the lower bound is now the intrinsic value of $Q$ exchange options on the forward $F_P(0;T_1,T_2)$ for the forward $F_K(0;T_1,T_2)$.

The advantage of these no-arbitrage bounds lies in the fact that, even though there is not a forward contract traded on the market for the total period $[T_1,T_2]$, this period is usually a multiple of calendar years, whose contracts are commonly traded. For example, in the Italian RO design, the period $[T_1,T_2]$ starts on January, 1 of year $Y$ and lasts until December, 31 of year $Y + 2$: thus, $F_P(0;T_1,T_2)$ ends up  simply being the sum of the three calendar products for the years $Y$, $Y + 1$ and $Y + 2$. In the case when the stochastic strike $K$ is indexed with some marginal technology determined  in advance (e.g.~combined cycle gas turbines), analogous forward contracts possibly exist for the corresponding fuel (gas in this case). 

The no-arbitrage bounds above are model-free, in the sense that they hold for any no-arbitrage model that we specify in the following for the dynamics of $P$, and possibly of $K$, the only assumption needed being the existence of a price floor for $P$. However, to evaluate the RO as a financial contract, it is necessary to specify the stochastic process modeling electricity prices. The electricity price shows peculiarities that make it difficult to model, such as strong seasonality and mean-reversion. For this reason, several processes have been adopted to reproduce the price dynamics. In what follows, we provide semi-explicit formulae to price a RO over $[T_1,T_2]$ under different price dynamics. Note that the price models generally used to evaluate options do not allow for negative prices. We will use models of this kind in the subsequent sections, while allowing for negative prices in Section 3.6 below.

\subsection{Electricity spot price as a geometric Brownian motion}\label{Sec:GBM}

Let us start with the simplest assumption, i.e. that  the price of electricity $P$ evolves as a GBM, and that the option's strike price $K$ is a fixed deterministic value. We stress that the former is an assumption that we already know is unreasonable, in the sense that it cannot be assumed to provide a realistic representation of the electricity price dynamics. However, it is the simplest possible assumption that is used to derive explicit pricing formulae for call options. Thus, we treat it as a first simplified approach to help us presenting the main features of the model. In this case, the price $P$, under the risk-neutral measure $\mathbb{Q}$, is assumed to be the solution of the following SDE:
\begin{align}
dP_t=&r P_t dt+\sigma P_t dB_t,\label{GBM_price}
\end{align}
where $B$ is a one-dimensional $\mathbb{Q}$-Brownian motion and $r$ is the instantaneous risk-free rate of return.

The price of a RO in this case is equivalent to the time integral over the interval $[T_1,T_2]$ of a European call option with strike price $K$ and maturity ranging in $[T_1,T_2]$. 
In the following proposition, we provide a semi-explicit formula to price the RO, under the assumptions above. 
\begin{proposition}\label{Prop:GBM} 
Let the reference market price $P$ follow the dynamics~\eqref{GBM_price}. The price of a reliability option over the time interval $[T_1,T_2]$ with fixed strike price $K\geq0$ is given by the following formula:
\begin{align}
RO(T_1,T_2)=&\int_{T_1}^{T_2} Q\left[P_0 N(d_1(K,P_{0},t))-e^{-r t}K N(d_2(K,P_{0},t))\right]dt \,, \label{GBM_price_formula_BS}
\end{align}
where $N$ is the cumulative distribution function (CDF) of a standard Gaussian random variable and
\begin{align*}
d_1(K,P_{0},t) \colon = &\frac{1}{\sigma\sqrt{t}}\left[\ln\left(\frac{P_{0}}{K}\right)+\left(r+\frac{\sigma^2 }{2}\right)t\right]\,,\\
d_2(K,P_{0},t)\colon  = & d_1(K,P_{0},t)-\sigma \sqrt{t}\,.
\end{align*}

\end{proposition}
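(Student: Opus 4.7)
The plan is to start from the general pricing formula~\eqref{form_of_price} specialized to $K_t \equiv K$, namely
$$RO(T_1,T_2) = \mathbf{E}^\mathbb{Q}\left[\int_{T_1}^{T_2} e^{-rt} Q (P_t - K)^+\, dt \,\Big|\, \mathcal{F}_0\right],$$
and to interchange the time integral with the conditional expectation via Fubini--Tonelli. Since the integrand $e^{-rt}Q(P_t-K)^+$ is non-negative and jointly measurable in $(t,\omega)$, Tonelli applies without any integrability hypothesis, so one obtains
$$RO(T_1,T_2) = Q\int_{T_1}^{T_2} e^{-rt}\, \mathbf{E}^\mathbb{Q}\left[(P_t - K)^+ \,\big|\, \mathcal{F}_0\right] dt.$$

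The core step is then to recognize, for each fixed $t\in[T_1,T_2]$, that the quantity $e^{-rt}\,\mathbf{E}^\mathbb{Q}[(P_t-K)^+|\mathcal{F}_0]$ is precisely the time-$0$ no-arbitrage price of a European call option with strike $K$ and maturity $t$ written on an asset whose risk-neutral dynamics is~\eqref{GBM_price}. Indeed, under $\mathbb{Q}$ the process $P$ is a geometric Brownian motion with drift $r$ and volatility $\sigma$ starting from $P_0$, so $\log(P_t/P_0)$ is Gaussian with mean $(r-\sigma^2/2)t$ and variance $\sigma^2 t$. A direct computation of the Gaussian integral, or equivalently a citation of the classical Black--Scholes formula, yields
$$e^{-rt}\mathbf{E}^\mathbb{Q}[(P_t-K)^+|\mathcal{F}_0] = P_0\, N(d_1(K,P_0,t)) - e^{-rt}K\, N(d_2(K,P_0,t)),$$
with $d_1, d_2$ exactly as defined in the statement. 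Plugging this into the integral produces~\eqref{GBM_price_formula_BS}.

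There is no real obstacle here: the argument is essentially a Fubini swap followed by an invocation of the Black--Scholes formula. The only things to be careful about are checking that Tonelli is applicable (which is automatic by non-negativity) and keeping track of the $\mathcal{F}_0$-conditioning, noting that the formula is stated as if $P_0$ were deterministic, which is standard when pricing at the auction time $t=0$. Once these bookkeeping points are settled, the proof is immediate.
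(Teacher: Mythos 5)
Your proof is correct and follows exactly the same route as the paper's: a Tonelli swap justified by non-negativity of the integrand, followed by identifying the inner discounted expectation as a Black--Scholes call price with maturity $t$. The extra bookkeeping you note (the lognormal law of $P_t$ and the $\mathcal{F}_0$-conditioning) is consistent with, and slightly more explicit than, the paper's own argument.
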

Proposition \ref{Prop:GBM} simply uses the Black and Scholes formula, since $RO(T_1,T_2)$ can be defined as the time integral of a family of call options with the same underlying and strike price, indexed by their maturity in $[T_1,T_2]$.\footnote{Interestingly enough, this result solves also a problem firstly posed in \cite{McDonaldSiegel}, in the framework of firms' evaluations.} Thus, it provides a formula that can be applied to compute the value of the RO, once the parameters upon which the call depends on have been set; namely, the risk-free interest rate $r$, the starting price $P_{0}$ and the electricity price volatility $\sigma$.

\subsection{Electricity price and strike price as correlated Geometric Brownian Motions}\label{Sec:GBM_corr}

A first step to increase the level of complexity consists in modeling the strike price as a stochastic process. Recall that, in ROs, the strike price is the marginal cost of the marginal technology. Complex RO schemes can allow it to change over time, according to a predefined rule. For instance, it can be assumed that the strike price is given by the fuel cost of a predefined marginal technology, such as Combined Cycle Gas Turbines. In such a way, the strike price will be linked to a reference fuel price. Alternatively, it can be established that the reference price changes at fixed regular dates according to a given indexing formula, for example monthly, and stays constant in each of these sub periods.\footnote{As mentioned, this is going to be the case of the future Italian RO scheme.} 
Both cases imply that the strike price is a stochastic process. Thus, a first extension of the model defined in Section~\ref{Sec:GBM} is to model $K$ and $P$ as two (possibly correlated) geometric Brownian motions. This means that the prices $(K_t,P_t)_{t\geq0}$ follow a risk-neutral dynamics of the following type:
\begin{equation} \label{corr_GBM}
\left\{\begin{array}{l}
dK_t = (r-q_k) K_t dt+\sigma_k K_t dB^1_t \,,\\
dP_t = (r-q_p) P_t dt+\sigma_p P_t dB^2_t\,,
\end{array}\right.
\end{equation}
where $(B^1,B^2)$ are correlated $\mathbb{Q}$-Brownian motions, with correlation $\rho\in [-1,1]$. 
Notice that the correlation of the two stochastic processes depends on the rules defining the strike price and on the strike price nature. For instance, if the variable strike price is set to be equal to the marginal cost of the marginal technology, and if the electricity market is perfectly competitive, the system marginal price will be equal to the marginal cost of the marginal technology. Thus, the correlation coefficient would be equal to $1$. If, on the contrary, the stochastic strike price equals some weighted average of different marginal costs at different hours, for instance at peak and off-peak hours, then the correlation coefficient would be positive but less than $1$, since the electricity price $P$ would be more volatile than the strike price $K$. Finally, it is also possible that the strike price is negatively correlated with the electricity price, depending on how the strike price is defined and on what reference basket it is linked to. However, this possibility is rather unlikely, for the reasons mentioned above. 

The following proposition provides the value of the RO with two GBMs:
\begin{proposition}\label{Prop:corr_GBM} 
Let the reference market price $P$ and the RO strike price $K$ follow the dynamics~\eqref{corr_GBM}. Then the price of a reliability option over the time interval $[T_1,T_2]$ is given by
\begin{equation}\label{price_corr_GBM}
RO(T_1,T_2)=
\int_{T_1}^{T_2} \left(P_{0}e^{-q_p t}N(a_1(K_{0},P_{0},t)) - K_{0} e^{-q_k t} N(a_2(K_{0},P_{0},t)) \right) dt\,,
\end{equation}
where $N$ is the CDF of a standard normal random variable,  and
\begin{align*}
a_1(K_{0},P_{0},t) \colon = & \frac{\ln\left(\frac{P_{0}}{K_{0}}\right)+(q_p-q_k)t}{\sigma \sqrt{t}}  + \frac{1}{2} \sigma \sqrt{t}\,,\\
a_2(K_{0},P_{0},t) \colon = & a_1(K_{0},P_{0},t)-\sigma \sqrt{t}\,, \\
\sigma \colon = & \sqrt{\sigma_k^2+\sigma_p^2-2\rho\sigma_k\sigma_p} = \sqrt{(\sigma_k - \sigma_p)^2 + 2 (1 -\rho) \sigma_k \sigma_p}\,.
\end{align*}
\end{proposition}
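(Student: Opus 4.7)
The plan is to combine the general pricing representation~\eqref{form_of_price} with Fubini's theorem and Margrabe's exchange option formula. Under the joint log-normal dynamics~\eqref{corr_GBM}, the integrand $e^{-rt}(P_t-K_t)^+$ is nonnegative and has finite conditional expectation, so Tonelli's theorem allows me to interchange the time integral with $\mathbf{E}^\mathbb{Q}[\,\cdot\mid\mathcal{F}_0]$, yielding
\begin{equation*}
RO(T_1,T_2) = \int_{T_1}^{T_2} e^{-rt}\,\mathbf{E}^\mathbb{Q}\bigl[(P_t-K_t)^+\mid\mathcal{F}_0\bigr]\,dt.
\end{equation*}
Each pointwise quantity $e^{-rt}\mathbf{E}^\mathbb{Q}[(P_t-K_t)^+\mid\mathcal{F}_0]$ is the time-$0$ price of a Margrabe exchange option of maturity $t$ on the dividend-paying assets $P$ and $K$. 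What remains is to derive its closed-form expression and then integrate over $t\in[T_1,T_2]$.

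To derive the exchange option price I would decompose
\begin{equation*}
\mathbf{E}^\mathbb{Q}[(P_t-K_t)^+\mid\mathcal{F}_0] = \mathbf{E}^\mathbb{Q}[P_t\,\mathds{1}_{\{P_t>K_t\}}\mid\mathcal{F}_0] - \mathbf{E}^\mathbb{Q}[K_t\,\mathds{1}_{\{P_t>K_t\}}\mid\mathcal{F}_0]
\end{equation*}
and perform two changes of numeraire, using respectively $e^{(q_p-r)t}P_t/P_0$ and $e^{(q_k-r)t}K_t/K_0$ as Radon--Nikodym densities (both are $\mathbb{Q}$-martingales by~\eqref{corr_GBM}). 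Under each new measure, the pair $(B^1,B^2)$ acquires a deterministic drift by Girsanov's theorem, while the quadratic covariation $d\langle B^1,B^2\rangle_t=\rho\,dt$, and hence the variance of $\ln(P_t/K_t)$, is preserved. A direct calculation gives this variance as $\sigma^2 t=(\sigma_p^2+\sigma_k^2-2\rho\sigma_p\sigma_k)t$, matching the effective volatility in the statement, and the two resulting Gaussian probabilities produce precisely $N(a_1)$ and $N(a_2)$.

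An equivalent and slightly cleaner route is to observe that, by It\^o's formula, the ratio $Z_t:=P_t/K_t$ is itself a GBM with volatility $\sigma$, and then to apply the one-dimensional Black--Scholes formula to the call $K_t(Z_t-1)^+$ after taking $K_t$ as numeraire; this bypasses the need for two separate Girsanov arguments. The main potential difficulty is purely bookkeeping: carefully tracking the drifts $r-q_p$ and $r-q_k$, the dividend prefactors, and the correlation $\rho$ through the change of measure, so that the coefficients inside $a_1$, $a_2$ and the exponential factors $e^{-q_p t}$, $e^{-q_k t}$ emerge with the correct signs. Integrating the resulting integrand $P_0 e^{-q_p t}N(a_1)-K_0 e^{-q_k t}N(a_2)$ over $[T_1,T_2]$ then yields~\eqref{price_corr_GBM}.
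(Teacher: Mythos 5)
Your proposal is correct and follows essentially the same route as the paper: apply Tonelli's theorem to write $RO(T_1,T_2)$ as a time integral of exchange-option prices of maturity $t$, then invoke the Margrabe formula with dividends. The only difference is that you also sketch a derivation of the Margrabe formula itself (via change of numeraire, or by noting that $P/K$ is a GBM with volatility $\sigma$), whereas the paper simply cites \cite{CarDur}; both sketches are standard and sound.
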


In contrast to Proposition \ref{Prop:GBM}, in Proposition \ref{Prop:corr_GBM} we used the Margrabe formula with dividends (see, for instance, \cite{CarDur}), instead of the Black-Scholes one.
Here, the $RO(T_1,T_2)$ value is equal to the time integral of a family of options to exchange the (random) electricity price $P$ with the (random) strike price $K$, again indexed by their maturity. As usual in the Margrabe formula, the relevant volatility is $\sigma$, that can be interpreted as the volatility of the ratio $P/K$ (i.e., of the electricity price expressed in units of the strike price), which is decreasing with respect to the correlation $\rho$. In particular, for $\rho \to 1$ (i.e.~when the strike price is highly correlated with the electricity price), we have $\sigma \to |\sigma_k - \sigma_p|$. In this case, when also $\sigma_k = \sigma_p$, the volatility vanishes, and the value of the option is determined just by its intrinsic value. Instead, for $\rho \to - 1$ (i.e. when the strike price is highly negatively correlated with the electricity price), we have $\sigma \to \sigma_k + \sigma_p$, i.e., the volatility is maximized. However, we stress that this latter case is rather unlikely for the case of RO, as typically a stochastic strike price $K$ is defined in terms of quantities related to electricity generation (as e.g.~the marginal price of the marginal technology, or some related market index), so that we should expect a positive correlation.

\subsection{Mean-reverting electricity price with seasonality}\label{Sec:season_1}

As mentioned, a GBM does not capture typical stylized facts of electricity prices, namely seasonality and mean-reversion. A natural extension is thus to price the RO when the dynamics of the reference price reflects the aforementioned stylized facts. In particular, we  model the log-spot price of electricity as a mean-reverting process encoding different types of seasonality by means of a time-dependent function. This approach has been widely adopted in energy markets, see for instance \cite{Benth} and references therein. We first assume a deterministic strike price. In the next section, we shall remove this assumption.

We define the function describing seasonality trends for all $t\geq0$, as
\begin{equation} \label{eq:seasonality}
\mu(t)= \alpha + \sum_{i=1}^{12} \beta_i \,month_i(t) + \sum_{i=1}^{4} \delta_i \,day_i(t) + \sum_{i=1}^{24} \gamma_i\, hour_i(t) \,,
\end{equation}  
where $month_i(t)$, $day_i(t)$ and $hour_i(t)$ are dummies for month, day of week and hour, used to capture different types of seasonality. Specifically, we assume that $day$ can take 4 values: `Friday', `Weekend', `Monday', and `other working day'. This captures the differences between working days and weekend as well as possible first- or end-of-the-working-week effect.

We then consider the day-ahead price $P$ as 
\begin{equation}\label{Price_seasonality}
P_t = 
 e^{\mu(t)}e^{X_t} \, ,
\end{equation}
where $X_t$, under the risk-neutral measure $\mathbb{Q}$, is the solution of the SDE
\begin{align}
dX_t=&-\lambda X_t dt+\sigma dW_t \, ,\label{OU_log_price}
\end{align}
where $W$ is a one dimensional $\mathbb{Q}$-Brownian motion, $\sigma$ stands for the volatility and $\lambda>0$ is the mean-reversion speed.

We have the following:
\begin{proposition}\label{Prop:OU} 
Let the reference market price $P$ follow the dynamics~\eqref{eq:seasonality}--\eqref{Price_seasonality}--\eqref{OU_log_price}. Then the price of a reliability option over the time interval $[T_1,T_2]$ with fixed strike price $K\geq0$ is given by 
\begin{align}\label{OU_price}
RO(T_1,T_2) = & Q \int_{T_1}^{T_2} e^{-r t} \left[  f(0,t) N(d_1(K,P_{0},t)) - K N(d_2(K,P_{0},t))\right] dt \, ,
\end{align}
where $N$ is the CDF of a normal random variable, $P_0 = e^{\mu(0) + X_0}$  
and
\begin{align*}
f(0,t) \colon = & {\mathbf E}[P_t | {\mathcal F}_0] = \exp\left( \mu(t) + e^{-\lambda t} + \frac12 Var(t) \right), \\
Var(t) \colon = & \frac{\sigma^2}{2\lambda}(1-e^{-2\lambda t}), \\
d_{1,2}(K,P_{0},t) \colon =& \frac{1}{\sqrt{Var(t)}} \log \frac{f(t,T)}{K} \pm \frac12 \sqrt{Var(t)},
\end{align*}
where, by abuse of notation we mean that the definition of $d_{1}(K,P_{0},t)$ involves the $+$ sign and the definition of $d_{2}(K,P_{0},t)$ involves the $-$ sign.
\end{proposition}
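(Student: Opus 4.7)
The starting point is the general RO pricing formula \eqref{form_of_price}, which by Tonelli's theorem (the integrand is nonnegative) yields
\begin{equation*}
RO(T_1,T_2) \;=\; Q\int_{T_1}^{T_2} e^{-rt}\,\mathbf{E}^{\mathbb{Q}}\!\left[(P_t-K)^+ \,\middle|\, \mathcal{F}_0\right]dt\,,
\end{equation*}
so the task reduces to computing the $\mathcal{F}_0$-conditional expectation of a European call payoff on $P_t$ with strike $K$, for each fixed $t\in[T_1,T_2]$, and then integrating in $t$.

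The second step is to identify the conditional law of $P_t$. Solving the linear SDE \eqref{OU_log_price} gives
\begin{equation*}
X_t \;=\; X_0 e^{-\lambda t} + \sigma\int_0^t e^{-\lambda(t-s)}\,dW_s\,,
\end{equation*}
so $X_t\mid\mathcal{F}_0$ is Gaussian with mean $X_0 e^{-\lambda t}$ and variance $Var(t)=\frac{\sigma^2}{2\lambda}(1-e^{-2\lambda t})$ (a standard Itô-isometry computation). Since $P_t = e^{\mu(t)+X_t}$ and $\mu(t)$ is deterministic, $\log P_t\mid\mathcal{F}_0$ is Gaussian with mean $m(t):=\mu(t)+X_0 e^{-\lambda t}$ and the same variance $Var(t)$. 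In particular $P_t$ is conditionally lognormal and
\begin{equation*}
f(0,t) \;=\; \mathbf{E}^{\mathbb{Q}}[P_t\mid\mathcal{F}_0] \;=\; \exp\!\Bigl(m(t)+\tfrac{1}{2}Var(t)\Bigr)\,,
\end{equation*}
which matches the stated $f(0,t)$ (reading $X_0 e^{-\lambda t}$ in place of the printed $e^{-\lambda t}$).

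The third step is the classical Black-type formula: for any $Y=e^Z$ with $Z\sim N(m,v)$ and any $K\geq 0$,
\begin{equation*}
\mathbf{E}[(Y-K)^+] \;=\; e^{m+v/2}N(d_+) - K\,N(d_-)\,,\qquad d_\pm \;=\; \frac{\log(e^{m+v/2}/K)}{\sqrt{v}}\pm\tfrac{1}{2}\sqrt{v}\,,
\end{equation*}
proved by splitting $\{Y>K\}=\{Z>\log K\}$ and completing the square in the Gaussian density. Applying this with $m=m(t)$, $v=Var(t)$, so that $e^{m+v/2}=f(0,t)$, delivers
\begin{equation*}
\mathbf{E}^{\mathbb{Q}}[(P_t-K)^+\mid\mathcal{F}_0]\;=\;f(0,t)\,N(d_1(K,P_0,t)) - K\,N(d_2(K,P_0,t))\,,
\end{equation*}
with $d_{1,2}$ as in the statement. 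Plugging this back into the Fubini expression and multiplying through by $e^{-rt}$ gives \eqref{OU_price}.

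I do not anticipate a real obstacle: the only slightly delicate item is keeping the bookkeeping of the mean/variance of the OU process correct and recognizing that, once the conditional distribution of $P_t$ is lognormal, the Black-Scholes-type identity applies verbatim, with the risk-neutral forward $f(0,t)$ playing the role of the discounted underlying (Black's formula for a lognormal underlying with known forward). Compared with Proposition \ref{Prop:GBM}, the only change is that the forward $f(0,t)$ is no longer $P_0 e^{rt}$ but the OU-specific expression above, and the log-variance is $Var(t)$ rather than $\sigma^2 t$; otherwise the argument is structurally identical.
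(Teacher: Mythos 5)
Your proof is correct and follows essentially the same route as the paper's: Tonelli's theorem reduces the problem to a family of call prices indexed by maturity, and each is evaluated by the Black formula with forward $f(0,t)$ and log-variance $Var(t)$ (the paper phrases this via the forward-price dynamics $df(t,s)=f(t,s)\sigma e^{-\lambda(s-t)}dW_t$ and Black--Scholes with time-dependent volatility, whereas you compute the conditional lognormal law of $P_t$ directly, which is the same calculation). You also correctly read $X_0e^{-\lambda t}$ in place of the misprinted $e^{-\lambda t}$ in $f(0,t)$, and $f(0,t)$ in place of $f(t,T)$ in $d_{1,2}$.
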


\begin{remark}
Equation \eqref{OU_price} is a generalization of Equation \eqref{GBM_price_formula_BS}: in fact, if we let $\mu(t) := (r - q_p - \frac12 \sigma^2) t$ and $\lambda \to 0$, then we reobtain at the limit the model of the previous section. In fact, we have that $m_t \equiv X_0$, $Var(t) \to \sigma^2 t$, 
$$ e^{-r t} f(0,t) \to e^{(r - q_p)t + X_0}\,, $$
and
$$ d_1(K,P_0,t) \to \frac{1}{\sigma \sqrt{t}} \left(X_0 + (r - q_p) t - \frac12 \sigma^2 t - \ln K \right) = \frac{1}{\sigma \sqrt{t}} \ln \frac{e^{X_0 + (r - q_p) t}}{K} - \frac12 \sigma \sqrt{t} \,.$$
Thus, the pricing formula in Equation \eqref{OU_price} collapses into that of Equation \eqref{GBM_price_formula_BS}. 
\end{remark} 

\subsection{Allowing for mean-reverting strike price with seasonality}\label{Sec:season_2}

As a natural extension of the model in Section~\ref{Sec:season_1}, we now consider the case when the strike $K$ is a mean-reverting process (with seasonality) as well. 
The dynamics of the state variables then becomes
\begin{equation}\label{prices_elec_12}
\left\{\begin{array}{ll}
P_t = & e^{\mu(t)} e^{X_t} \,,\\
K_t = & e^{\nu(t)} e^{Y_t}\,.
\end{array}\right.
\end{equation}
Here, $\mu$ is given by~\eqref{eq:seasonality} and $\nu$ is a seasonality function for $K$ of the same form, while the processes $X$ and $Y$ are solution to
\begin{equation}\label{2OU_log_price}
\left\{
\begin{array}{ll}
dX_t=&-\lambda_x X_t dt+\sigma_x dW^1_t\,,\\
dY_t=&-\lambda_y Y_t dt+\sigma_y dW^2_t\,,
\end{array}
\right.
\end{equation}
where $(W^1,W^2)$ are correlated $\mathbb{Q}$-Brownian motions, with correlation $\rho\in [-1,1]$.

\begin{proposition}\label{Prop:2OU} 
Let the reference market price $P$ and the RO strike price $K$ follow the dynamics~\eqref{prices_elec_12}; then the price of a reliability option over the time interval $[T_1,T_2]$ is given by 

\begin{align}\label{2OU_price}
RO(T_1,T_2) = & Q \int_{T_1}^{T_2} e^{-r t} \left( f_P(0,t) N\left(d_2(K_0,P_0,t)\right) -  f_K(0,t) N\left(d_1(K_0,P_0,t)\right)\right) dt \, ,\end{align}
where $N$ is the CDF of a normal random variable, $P_0 = e^{\mu(0) + X_0}$, $K_0 = e^{\nu(0) + Y_0}$  
and

\begin{eqnarray}
f_P(0,t) & \colon = & {\mathbf E}[P_t | {\mathcal F}_0] = \exp\left( \mu(t) + e^{-\lambda_x t} + \frac{\sigma_x^2}{2\lambda_x}(1-e^{-2\lambda_x t}) \right), \\
f_K(0,t) & \colon = & {\mathbf E}[K_t | {\mathcal F}_0] = \exp\left( \nu(t) + e^{-\lambda_y t)} + \frac{\sigma_y^2}{2\lambda_y}(1-e^{-2\lambda_y t}) \right), \\
d_{1,2}(K_0,P_{0},t) & \colon = & \frac{1}{\sqrt{\overline{Var}(t)}} \log \frac{f_P(0,t)}{f_K(0,t)} \pm \frac12 \sqrt{\overline{Var}(t)}, \\
\overline{Var}(t) &\colon = & \sigma_x^2 \frac{1 - e^{-2\lambda_x t}}{2 \lambda_x} + \sigma_y^2 \frac{1 - e^{-2\lambda_y t}}{2 \lambda_y} - 2 \rho \sigma_x \sigma_y \frac{1 - e^{-(\lambda_x+\lambda_y) t}}{\lambda_x + \lambda_y}\, . \label{barvar}
\end{eqnarray}

\end{proposition}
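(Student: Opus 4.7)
The plan is to reduce \eqref{form_of_price} to an integrated exchange option, exactly as in the proof of Proposition \ref{Prop:corr_GBM}, only with the Gaussian parameters determined by the OU solutions rather than by GBMs. First, by Fubini's theorem,
\[ RO(T_1,T_2) = Q \int_{T_1}^{T_2} e^{-rt}\, \mathbf{E}^{\mathbb{Q}}\!\left[(P_t - K_t)^+ \mid \mathcal{F}_0\right] dt, \]
so it is enough to find a closed form for the inner expectation at each $t \in [T_1,T_2]$.

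Second, since \eqref{2OU_log_price} is a linear system, the pair $(X_t,Y_t)\mid\mathcal{F}_0$ is jointly Gaussian, hence so is $(\log P_t, \log K_t)\mid\mathcal{F}_0$. From the explicit OU solution $X_t = X_0 e^{-\lambda_x t} + \sigma_x \int_0^t e^{-\lambda_x(t-s)}\,dW^1_s$, and the analogous one for $Y_t$, the conditional means and variances of $X_t$, $Y_t$ follow from Itô isometry, while the cross-covariance is obtained by using $d\langle W^1,W^2\rangle_s = \rho\, ds$ to compute $\rho\sigma_x\sigma_y \int_0^t e^{-(\lambda_x+\lambda_y)(t-s)}\,ds$. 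Combining these three pieces gives $\overline{Var}(t) = \mathrm{Var}(X_t - Y_t \mid \mathcal{F}_0)$ as in \eqref{barvar}, while the moment generating function of a Gaussian gives $\mathbf{E}[P_t\mid\mathcal{F}_0]=f_P(0,t)$ and $\mathbf{E}[K_t\mid\mathcal{F}_0]=f_K(0,t)$.

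Third, with $(\log P_t,\log K_t)\mid\mathcal{F}_0$ bivariate normal and the parameters in hand, $(P_t - K_t)^+$ is a European exchange-option payoff on two correlated log-normals. I would then invoke the Margrabe formula in the same form already used in Section \ref{Sec:GBM_corr} — either citing it, or reproving it by splitting $\mathbf{E}[(P_t-K_t)^+] = \mathbf{E}[P_t\mathds{1}_{P_t>K_t}] - \mathbf{E}[K_t\mathds{1}_{P_t>K_t}]$ and computing each term by taking $P_t$ and $K_t$ successively as numéraire, or directly by evaluating the two-dimensional Gaussian integral. The argument of the normal CDFs becomes $\frac{1}{\sqrt{\overline{Var}(t)}}\log\!\left(f_P(0,t)/f_K(0,t)\right) \pm \tfrac12\sqrt{\overline{Var}(t)}$, matching $d_{1,2}(K_0,P_0,t)$; substituting back into the Fubini representation yields \eqref{2OU_price}.

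The only genuine obstacle is the covariance computation in step two: because the two mean-reversion speeds $\lambda_x$ and $\lambda_y$ are allowed to differ, the cross-term does not collapse to either marginal variance and must be evaluated carefully to produce the non-symmetric factor $\frac{1-e^{-(\lambda_x+\lambda_y)t}}{\lambda_x+\lambda_y}$ in \eqref{barvar}. Everything else is a direct reduction to the Margrabe setting via Fubini and a standard Gaussian identity, so conceptually this proof is Proposition \ref{Prop:corr_GBM} plus the OU moment calculations.
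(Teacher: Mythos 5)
Your proposal is correct and follows essentially the same route as the paper: Tonelli/Fubini reduces the RO to a time-integral of exchange options, the OU solutions give the conditional Gaussian law of $(X_t,Y_t)$ whose variance of the difference is exactly $\overline{Var}(t)$, and the Margrabe formula closes the argument. The only (cosmetic) difference is that the paper packages the same moment computation through the forward-price dynamics $df_P(t,s)=f_P(t,s)\sigma_x e^{-\lambda_x(s-t)}dW^1_t$ and invokes Margrabe with time-dependent deterministic volatility, whereas you compute the bivariate lognormal law directly and apply the static form of the formula -- the integral of the squared volatility of $f_P(\cdot,s)/f_K(\cdot,s)$ in the paper is precisely your $\mathrm{Var}(X_t-Y_t\mid\mathcal{F}_0)$.
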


This result is a similar to that of Proposition \ref{Prop:OU} in the same sense as 
Proposition \ref{Prop:corr_GBM} is similar to Proposition \ref{Prop:GBM}: here $RO(T_1,T_2)$ can be again defined as the time integral of a family of options to exchange the electricity price $P$ with the strike price $K$. Here too, the relevant volatility is $\overline{Var}(t)$, which can again be interpreted as the volatility of the ratio $P/K$ (i.e., the electricity price expressed in units of the strike price: this is made explicit in the proof in the Appendix), which is again decreasing with respect to the correlation $\rho$. In particular, for $\rho \to 1$ (i.e.~when the strike price is highly correlated with the electricity price), and $\lambda_x = \lambda_y =: \lambda$ (i.e.~when the two mean-reversion speeds are the same), we have $\overline{Var}(t) \to \frac{1 - e^{-2\lambda t}}{2 \lambda} (\sigma_x - \sigma_y)^2 $. In this case, when $\sigma_x = \sigma_y$, the volatility vanishes, and the value of the option is given just by its intrinsic value. Instead, in the unlikely case (see the discussion at the end of Section \ref{Sec:GBM_corr}) when $\rho \to - 1$  and $\lambda_x = \lambda_y =: \lambda$, we have $\overline{Var}(t) \to  \frac{1 - e^{-2\lambda t}}{2 \lambda} (\sigma_x + \sigma_y)^2$, i.e., the volatility is maximized. 

\subsection{Possible extension to negative day-ahead and strike prices}\label{Sec:season_2_neg}

In principle, it is possible to allow for negative power prices, since we know this is a possibility in energy markets, see~\cite{EGV} and references therein. An analogous extension can be also envisaged for strike prices, especially when these are linked to power prices.
A possible approach to model negative prices is to set negative values $-P^*$ and $-K^*$, for certain $P^*,K^* \geq 0$, as price floors for $P$ and $K$, respectively, and to consider the following shifted dynamics 
\begin{equation}\label{prices_elec_neg}
\left\{\begin{array}{ll}
P_t = & \left(e^{\mu(t)} e^{X_t} - P^*\right)\,,\\
K_t = & \left(e^{\nu(t)} e^{Y_t} - K^*\right)\,.
\end{array}\right.
\end{equation}
where $\mu$ and $\nu$ are again seasonality functions for $P$ and $K$ and the processes $X$ and $Y$ are solution of Equation \eqref{2OU_log_price}, in analogy with the previous section. 

By setting $C := P^*-K^*$, one can prove that the price of the reliability option is now given by the following expression:
\begin{equation}
RO(T_1,T_2) = Q   \int_{T_1}^{T_2}e^{- r t}  \mathbf{E}^{\mathbb{Q}}\left[\left. (e^{\mu(t)}e^{X_t} - e^{\nu(t)}e^{Y_t} - C)^+\  \right| \mathcal{F}_{0} \right]dt\,.
\end{equation}
The above formula is the time integral of a family of spread options with a fixed strike price $C$ and indexed by their expiration date in $[T_1,T_2]$. Therefore, considering dynamics of type \eqref{prices_elec_neg} relates the problem of pricing a Reliability Option to the problem of pricing a spread option (see \cite{CarDur} for a survey of classical frameworks and methods for spread options).  
Unfortunately, a general closed formula for the pricing of spread options is not available. 
However, since the RO is in principle a quite illiquid product, one can use a numerical method to price it in this general case, for example Monte Carlo.

\section{Simulation and sensitivity analysis}\label{sec:simulation}

In this section we simulate the value of the RO under realistic assumptions on the parameter values. To do so, we fit the parameters of the electricity price dynamics to a real market, using data of the Italian market. For simplicity, we consider day-ahead prices only, and use the weighted average of Italian zonal prices, called PUN (\textit{Prezzo Unico Nazionale}), ranging from January 1 to December 31, 2016.

As previously explained, we used dummies to capture monthly, daily and hourly seasonality, as defined in Eq. \eqref{eq:seasonality}. We chose `January', `Friday' and `hour 1' as reference groups, against which the comparisons are made.
Figure \ref{fig:price_and_seasonality} shows the calibrated seasonality function, plotted against the historical PUN data.
Furthermore, we considered an annual risk-free rate $r = 0.01$ and, in the pricing models where the only stochastic variable is the electricity price, we considered $K=40$  \EUR/MWh. According to the scheme to be implemented in Italy, the pricing of the RO starts 4 years from now, and the option has a maturity of 3 years (\(T_1=4, \, T_2 = 7\)). 

The starting point $X_0$ is taken equal to 0.
Table \ref{tab:estimates} reports the estimated parameters for each different model, while Table \ref{tab:season_estimates} 
shows the estimated seasonality parameters.

\begin{figure}[H]
\caption{Seasonality function in \eqref{eq:seasonality} (solid red line, upper panel) calibrated on historical 2016 PUN electricity data (solid blue line, upper panel) and residuals (bottom panel).\label{fig:price_and_seasonality}}

\centering
	\begin{subfigure}[b]{0.75\textwidth}
	\includegraphics[width=\textwidth]{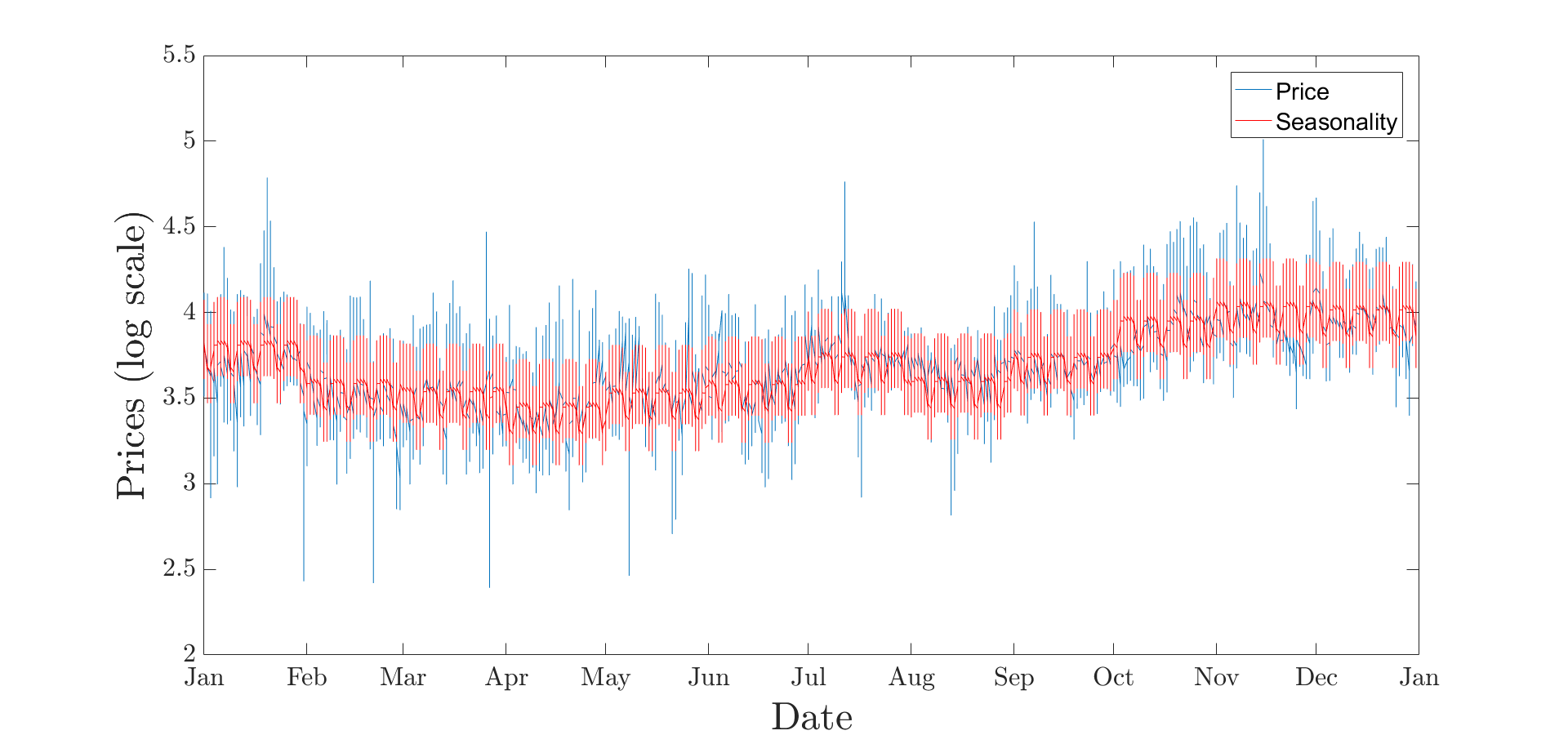}
	\end{subfigure}
\quad
	\begin{subfigure}[b]{0.75\textwidth}
	\includegraphics[width=\textwidth]{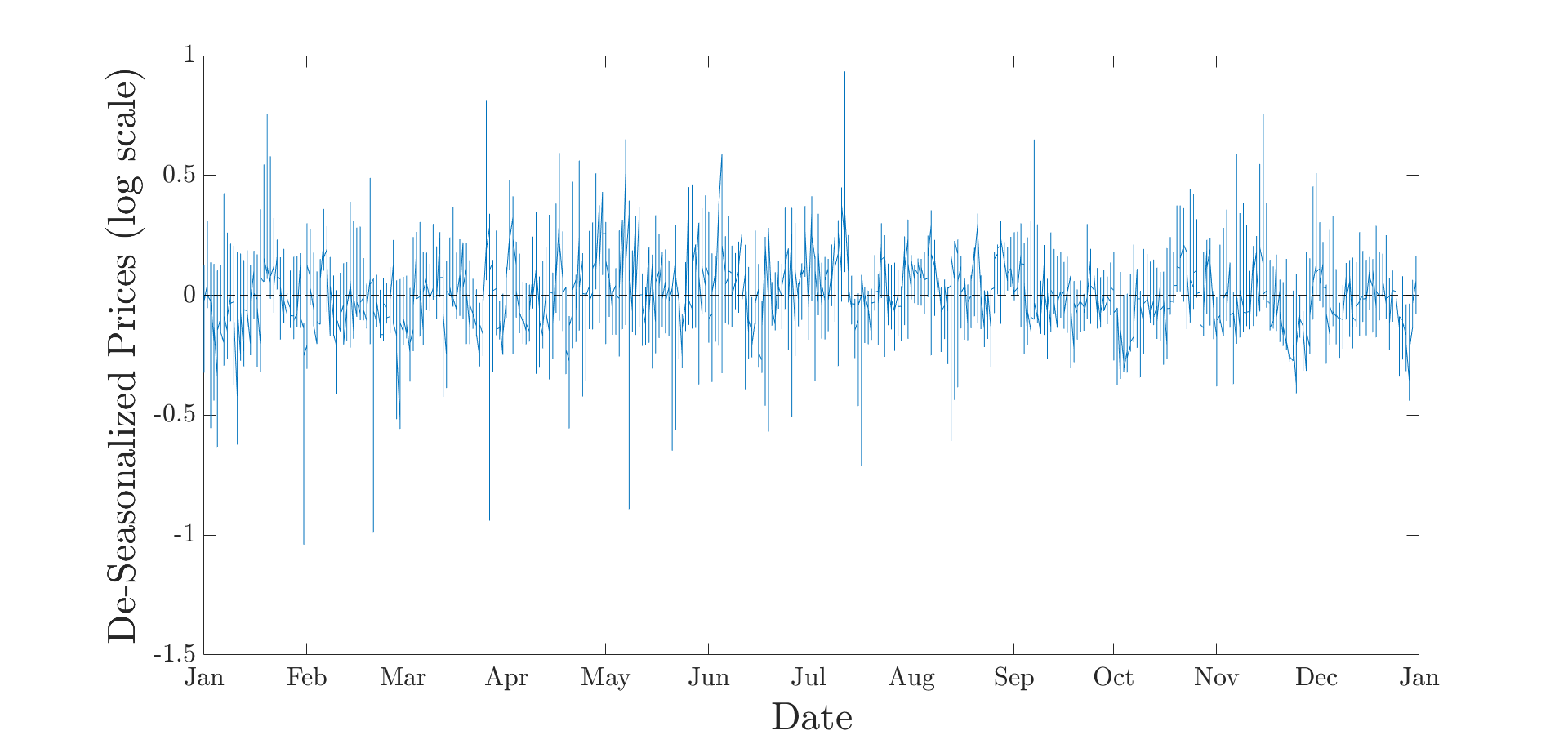}
	\end{subfigure}
\end{figure}

\begin{table}[h]
\centering
\small
\begin{tabular}{cccc}  
\toprule
 & GBM & 1-OU  & 2-OU  \\
\midrule
 $\hat{\sigma}$ & 5.4041 & 6.5932 & 6.5932\\
\cmidrule(r){1-1}
  $\hat{\lambda}$ & - & 294.84 & 294.84\\
 \bottomrule
\end{tabular}
\caption{Estimated yearly parameters $\hat{\sigma}$ and $\hat{\lambda}$ for each pricing model (electricity price following a Geometric Brownian motion (GBM), electricity price following a mean-reverting Ornstein-Uhlenbeck process (1-OU), correlated electricity and strike prices following mean-reverting Ornstein-Uhlenbeck processes (2-OU)).\label{tab:estimates}}

\end{table}

\begin{table}[H]
\centering
\small
\begin{tabular}{ccccccccc}  
\toprule
 & Estimate & S.E. & pValue &  &  & Estimate & S.E. & pValue \\
\cmidrule(r){1-4}	\cmidrule(r){6-9}
 Intercept & \( 3.79 \) & \( 0.01 \) & \( 0 \) &  &  $ hour_{6} $ & \( -0.13 \) & \( 0.01 \) & \( 0 \) \\
\cmidrule(r){1-1}	\cmidrule(r){6-6}
  $ month_2 $ & \( -0.22 \) & \( 0.01 \) & \( 0 \) &  &  $ hour_{7} $ & \( -0.01 \) & \( 0.01 \) & \( 0.5 \) \\
  \cmidrule(r){1-1}	\cmidrule(r){6-6}
  $ month_3 $ & \( -0.27 \) & \( 0.01 \) & \( 0 \)  & &  $ hour_{8} $ & \( 0.1 \) & \( 0.01 \) & \( 0 \) \\
  \cmidrule(r){1-1}		\cmidrule(r){6-6}
  $ month_4 $ & \( -0.36 \) & \( 0.01 \) & \( 0 \) & &  $ hour_{9} $  & \( 0.18 \) & \( 0.01 \) & \( 0 \) \\
  \cmidrule(r){1-1}		\cmidrule(r){6-6}
  $ month_5 $ & \( -0.28 \) & \( 0.01 \) & \( 0 \)  & &  $ hour_{10} $  & \( 0.16 \) & \( 0.01 \) & \( 0 \) \\
  \cmidrule(r){1-1}		\cmidrule(r){6-6}
  $ month_6 $ & \( -0.23 \) & \( 0.01 \) & \( 0 \)   & &  $ hour_{11} $  & \( 0.12 \) & \( 0.01 \) & \( 0 \) \\
  \cmidrule(r){1-1}		\cmidrule(r){6-6}
  $ month_7 $ & \( -0.07 \) & \( 0.01 \) & \( 0 \)   & &  $ hour_{12} $    & \( 0.07 \) & \( 0.01 \) & \( 0 \)  \\
  \cmidrule(r){1-1}		\cmidrule(r){6-6}
  $ month_8 $ & \( -0.21 \) & \( 0.01 \) & \( 0 \)     & &  $ hour_{13} $  & \( 0 \) & \( 0.01 \) & \( 0.8 \)  \\
  \cmidrule(r){1-1}		\cmidrule(r){6-6}
  $ month_9 $ & \( -0.07 \) & \( 0.01 \) & \( 0 \)      & &  $ hour_{14} $  & \( -0.05 \) & \( 0.01 \) & \( 0 \) \\
  \cmidrule(r){1-1}		\cmidrule(r){6-6}
  $ month_{10} $ & \( 0.14 \) & \( 0.01 \) & \( 0 \)     & &  $ hour_{15} $  & \( -0.02 \) & \( 0.01 \) & \( 0.13 \) \\
  \cmidrule(r){1-1}		\cmidrule(r){6-6}
  $ month_{11} $ & \( 0.23 \) & \( 0.01 \) & \( 0 \)    & &  $ hour_{16} $  & \( 0.04 \) & \( 0.01 \) & \( 0 \) \\
  \cmidrule(r){1-1}		\cmidrule(r){6-6}
  $ month_{12} $ & \( 0.21 \) & \( 0.01 \) & \( 0 \)     & &  $ hour_{17} $  & \( 0.09 \) & \( 0.01 \) & \( 0 \) \\
  \cmidrule(r){1-1}		\cmidrule(r){6-6}
  Monday &   \( -0.01 \) & \( 0.01 \) & \( 0.04 \)     & &  $ hour_{18} $  & \( 0.15 \) & \( 0.01 \) & \( 0 \) \\
    \cmidrule(r){1-1}		\cmidrule(r){6-6}
  Weekend &   \( -0.14 \) & \( 0.01 \) & \( 0 \)     & &  $ hour_{19} $  & \( 0.22 \) & \( 0.01 \) & \( 0 \) \\
    \cmidrule(r){1-1}		\cmidrule(r){6-6}
  Working\_day  & \( 0.02 \) & \( 0.01 \) & \( 0 \)      & &  $ hour_{20} $  & \( 0.28 \) & \( 0.01 \) & \( 0 \) \\
  \cmidrule(r){1-1}		\cmidrule(r){6-6}
  $ hour_{2} $  & \( -0.08 \) & \( 0.01 \) & \( 0 \)   & &  $ hour_{21} $ & \( 0.27 \) & \( 0.01 \) & \( 0 \)  \\
  \cmidrule(r){1-1}		\cmidrule(r){6-6}
  $ hour_{3} $  & \( -0.15 \) & \( 0.01 \) & \( 0 \)   & &  $ hour_{22} $ & \( 0.2 \) & \( 0.01 \) & \( 0 \)  \\
  \cmidrule(r){1-1}		\cmidrule(r){6-6}
  $ hour_{4} $ & \( -0.18 \) & \( 0.01 \) & \( 0 \) & & $ hour_{23} $ & \( 0.12 \) & \( 0.01 \) & \( 0 \)  \\
  \cmidrule(r){1-1}		\cmidrule(r){6-6}	
  $ hour_{5} $ & \( -0.18 \) & \( 0.01 \) & \( 0 \) & &   $ hour_{24} $ & \( 0.03 \) & \( 0.01 \) & \( 0.01 \)  \\
   \bottomrule
\end{tabular}
\caption{Linear regression estimates, standard errors and p-values obtained using the specification in \eqref{eq:seasonality}. The base group categories for each dummy variable are $month_1$, $friday$ and $hour_1$.}
\label{tab:season_estimates}
\end{table}

As mentioned, real electricity prices do not follow GBMs. Therefore, in the simulation, we start from the model defined in Section \ref{Sec:season_1}.

\subsection{Mean reverting electricity price with seasonality, fixed strike}

We simulate the value of the RO using the Monte Carlo methodology. Specifically, we compute the RO value using 10,000 simulations of the price path of the underlying.

\begin{figure}[h!]
	\centering
	\begin{subfigure}[b]{0.45\textwidth}
	\includegraphics[width=\textwidth]{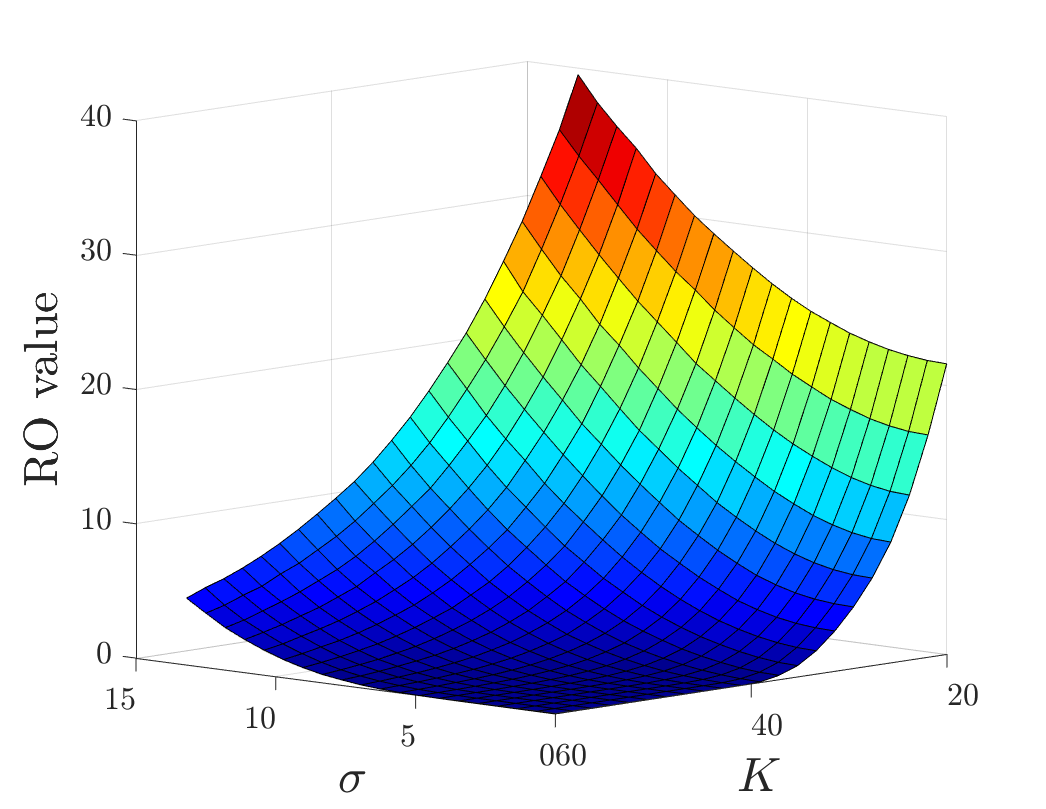}
		\end{subfigure}
	\quad
	\begin{subfigure}[b]{0.45\textwidth}
	\includegraphics[width=\textwidth]{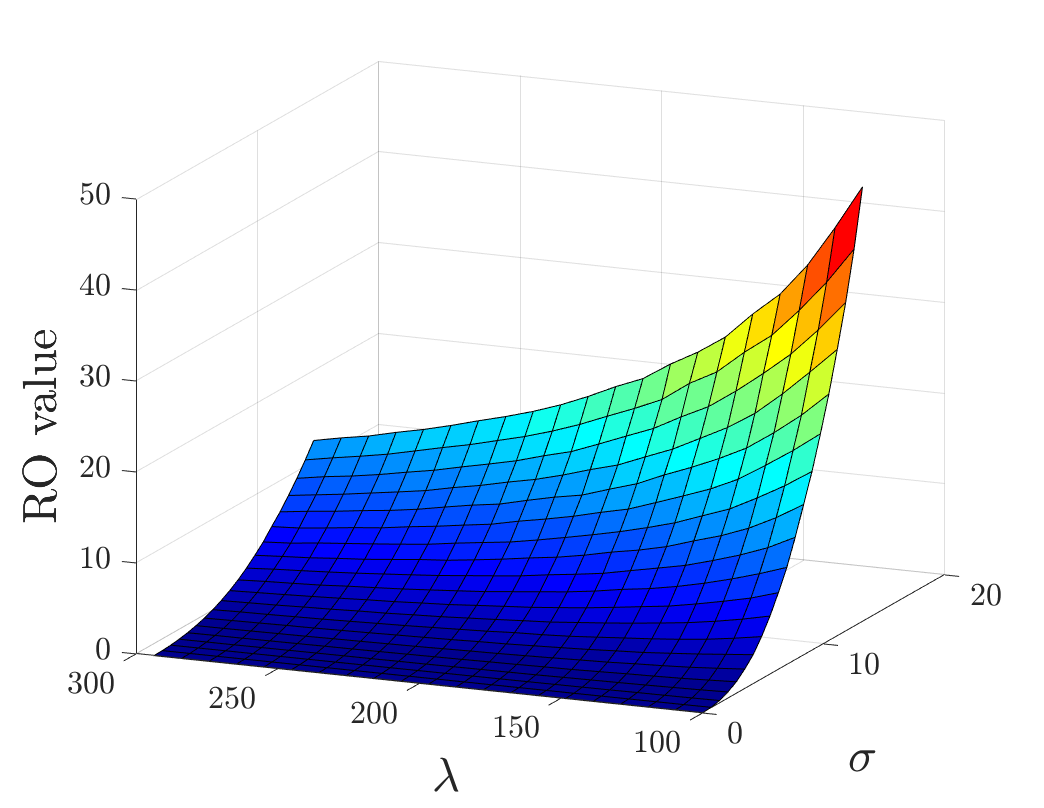}	
	\end{subfigure}
		\caption{Sensitivity analysis of the results using a yearly \(\sigma\) in the range \( (0;2\hat{\sigma}]\) with a strike price $K$ in the range $[20;60]$ (left panel), and a yearly \(\sigma\) in the range \( (0;2\hat{\sigma}]\) with and a yearly \(\lambda\) in the range \( (100;2\hat{\lambda}]\) (right panel). The RO value is expressed in \EUR/MWh.\label{fig:OU_sens}}
	
\end{figure}

Figure \ref{fig:OU_sens} shows the comparative statics for different ranges for the parameters \(\sigma\) and \(\lambda\) and strike price $K$. 
As expected, the higher the strike price, the lower the value of the reliability option for each value of $\sigma$ (left panel). 
On the other hand, both the left and right panels show that, when $\sigma$ increases, the RO value rises as well. Moreover, when $\lambda$ is low, the relative increase in the RO value is high (right panel). This is consistent with the fact that a low $\lambda$ allows fluctuations of the underlying that are far from the long term mean to be more persistent.  

\subsection{Electricity spot price and RO strike price as correlated OU with seasonality}

We simulate now the value of the RO using the model described in Section \ref{Sec:season_2}, again by means of a Monte Carlo method (again using 10,000 runs). We start from a given correlation coefficient, set at \(\rho=0.5\), and assume that \(\lambda_K\) and \(\sigma_K\) are equal to the ones estimated for the electricity price and $X_0 = 0$. 
In line with the PUN mean price, which is equal to 42.77 \EUR/MWh, $K_0$ is arbitrarily chosen equal to 40 \EUR/MWh, so that, after de-seasonalizing (using the same estimated seasonality parameters of the PUN price), we obtain $Y_0 = -0.21$. 

\begin{figure}[h!]
	\centering
	\begin{subfigure}[b]{0.45\textwidth}
	\includegraphics[width=\textwidth]{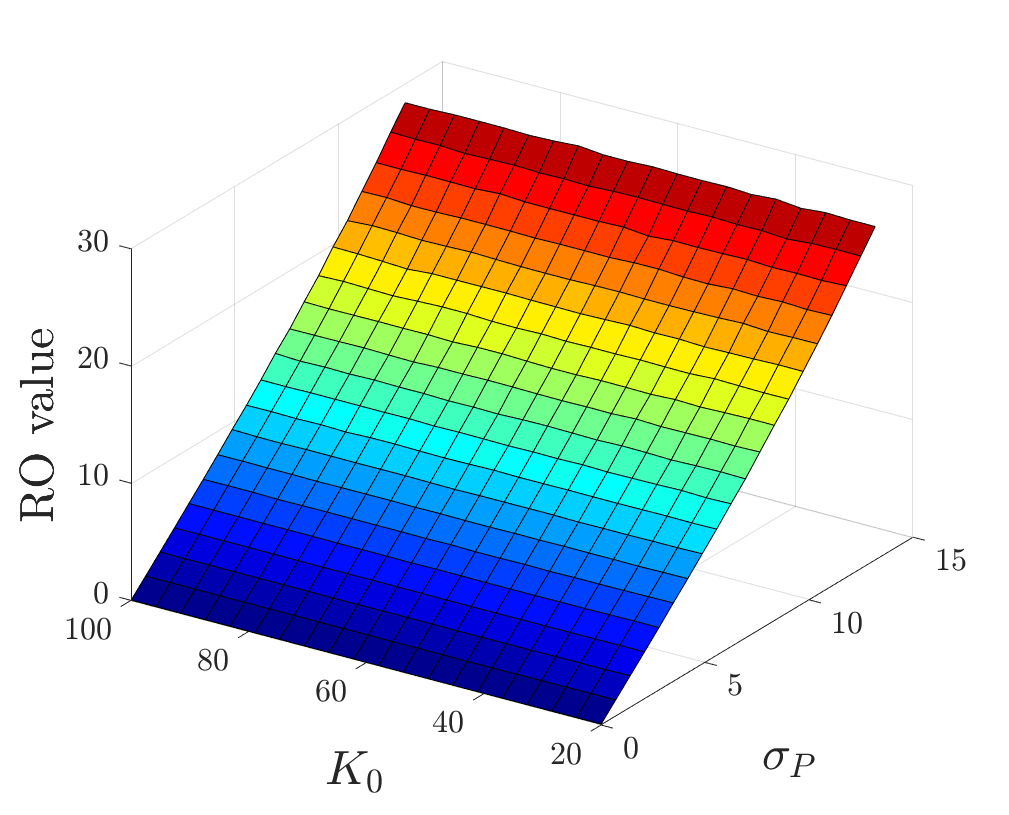}
	\end{subfigure}
	\quad
	\begin{subfigure}[b]{0.45\textwidth}
	\includegraphics[width=\textwidth]{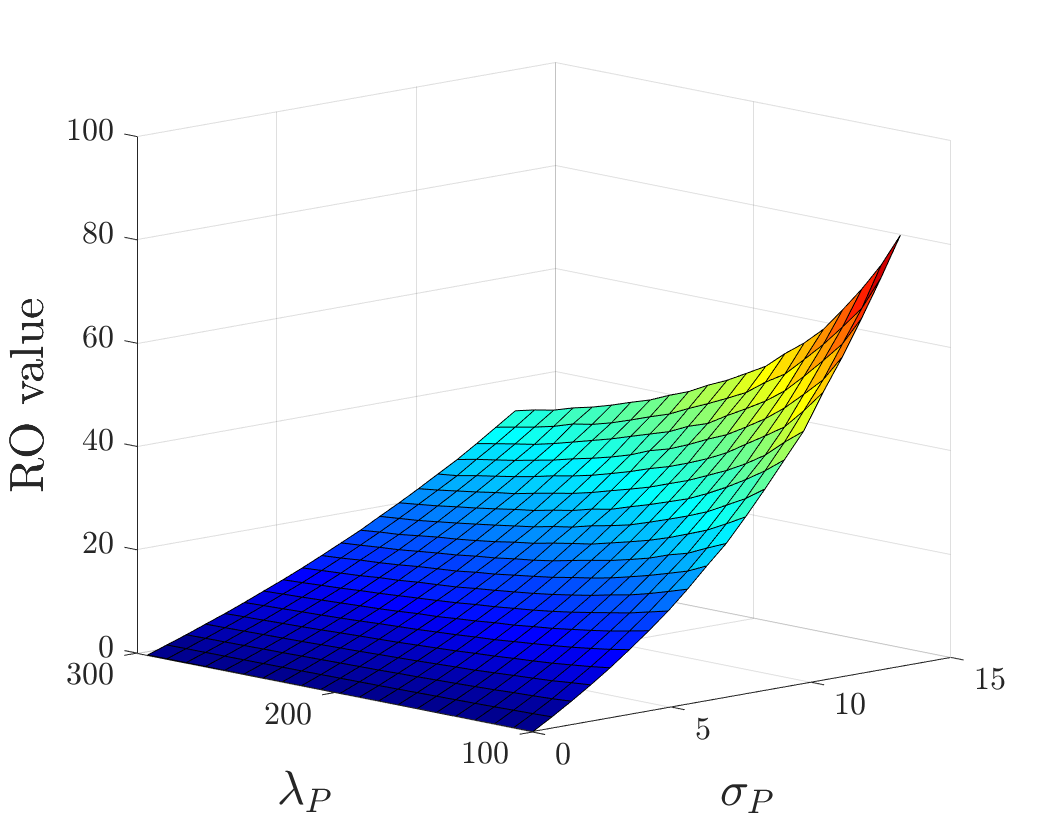}
	\end{subfigure}
	\quad
	\begin{subfigure}[b]{0.45\textwidth}
	\includegraphics[width=\textwidth]{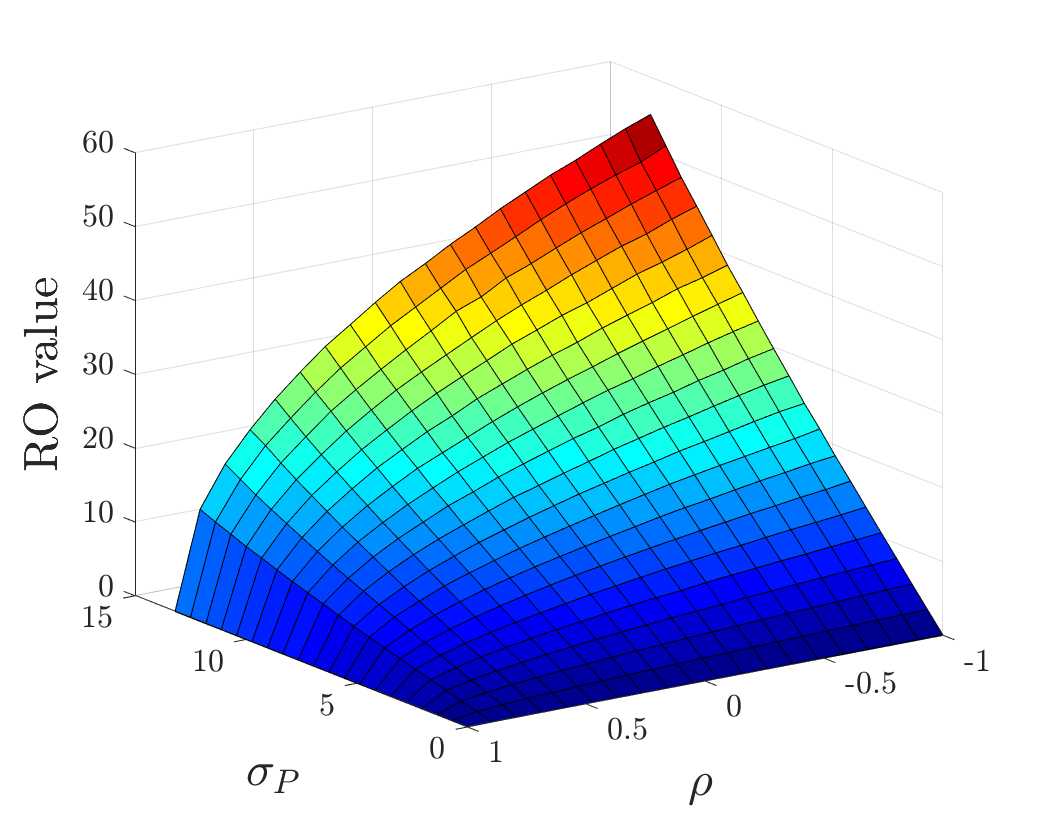}
	\end{subfigure}
	\quad
	\begin{subfigure}[b]{0.45\textwidth}
	\includegraphics[width=\textwidth]{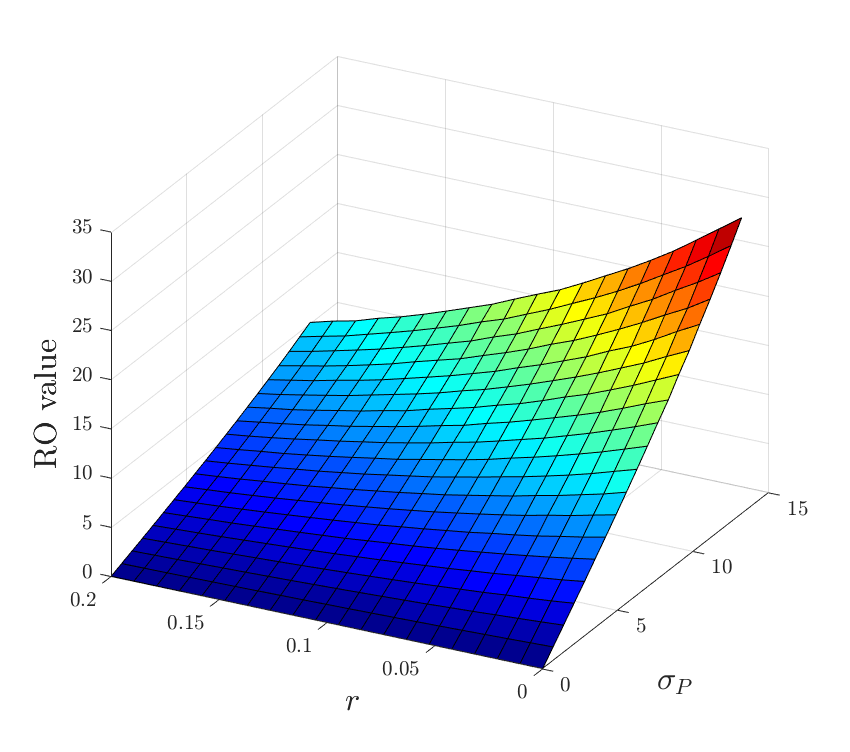}
	\end{subfigure}
	\caption{Sensitivity analysis of the results using a yearly \(\sigma_P\) in the range \( (0;2\hat{\sigma}_P]\) with an initial strike $K_0$ in the range \( [20;100]\) (upper left panel), with a yearly \(\lambda_P\) in the range \( (100;2\hat{\lambda}_P]\)(upper right panel),  with a correlation \(\rho\) in the range \( [-1;1]\) (left bottom panel) and with a yearly risk free rate \(r\) in the range \( [0;0.2]\) (right bottom panel).}
	\label{fig:2OU_sens}
\end{figure}

Figure \ref{fig:2OU_sens} shows the results when we assume the strike price process to have the same parameters estimated for the electricity price $P$. 
The upper left panel shows that the initial level of the strike price $K_0$ has no influence on the value of the reliability option. This is due to the magnitude of the estimated $\lambda_P$, and thus of $\lambda_K$: a mean reversion speed as high as that estimated makes the strike price process return to its mean level in an amount of time negligible with respect to the maturity. This implies that the starting point of the process has no relevant impact on the RO value.

The upper right panel of Figure \ref{fig:2OU_sens} instead shows how sensitive the RO value is to changes in the electricity price parameters $\lambda_P$ and $\sigma_P$ (and thus in turn in $\lambda_K$ and $\sigma_K$). Similarly to what we have observed  before, the higher the volatility of the underlying (and, in this case, of the strike price), the higher the RO value. This relationship increases in  proportionality as the speed of mean reversion decreases, since it takes more time to return to the mean, and thus volatility matters more.  

The impact of the correlation factor $\rho$ is instead investigated in the bottom left panel, where we assess how different correlation factors in the range $[-1;1]$ affect the price of the reliability option. When the two assets are perfectly correlated ($\rho =1$), the RO value is zero for all levels of $\sigma_P$. In fact, as seen in Section \ref{Sec:GBM_corr}, the volatility is minimized and the RO can be interpreted as an integral of calls, with maturity ranging in the interval $[T_1,T_2]$, being exactly at the money at the time of expiration, and thus having zero value. Instead, as shown, when the two processes are uncorrelated, the level of risk increases, and it reaches its maximum when they are perfectly negatively correlated. In this case, the volatilities of the two Brownian motions sum up, increasing the volatility of the option payoff and minimizing the risk of having the calls at the money. 
Finally, the bottom right panel shows that the RO price is negatively correlated with the risk free rate $r$: a higher $r$ decreases the option value as it lowers the discounted cash flows. 

In the previous figures, the parameters for \(\lambda_P\) and \(\lambda_K\), and \(\sigma_P\) and \(\sigma_K\), were tied together, in the sense that \(\lambda_K\) and \(\sigma_K\) were always equal to, respectively, \(\lambda_P\) and \(\sigma_P\).
Instead, we now investigate what happens when \(\sigma_K\) equals \(\sigma_P\) as before, but \(\lambda_K\) changes independently from \(\lambda_P\). Moreover, we also investigate the effects of a variation in $\sigma_P$ different from that in $\sigma_K$. Figure \ref{fig:2OU_sens_single_params} and \ref{fig:2OU_sens_sigmas_rho} show the results. 

The left panel of Figure \ref{fig:2OU_sens_single_params} reports the results for a variation in $\lambda_K$ (in the range \( (0;2\hat{\lambda}_P]\) and  shown in $\log_{10}$ scale) independent from the value of $\lambda_P$. The graph shows how $K_0$ hardly affects the $RO$ value, as it has an impact only when both \(\sigma_K\) and $\lambda_K$ are sufficiently small. This confirms the result shown above that the initial condition of the parameters matters only when it takes a sufficient amount of time for them (i.e., for the strike price in this case) to return to their long term value.
The right panel instead shows the sensitivity of the RO value to changes in the yearly \(\lambda_K\) (again in the range \( (0;\hat{\lambda}_P]\)) independent from the value of $\lambda_P$, and in the correlation factor \(\rho\) (in the range \( [-1;1]\)) (in this graph, \(\sigma_K\) is always equal to \(\sigma_P\) and they are in turn equal to \(\hat{\sigma}_P\), \(\lambda_P=\hat{\lambda}_P\), and \(\lambda_K\) is shown in $\log_{10}$ scale.). Here, the $\rho$ value matters more when both $\lambda_K = \lambda_P$ and \(\sigma_K = \sigma_P\). In fact, $\rho$ (negatively) affects the RO value only when it tends to $-1$ and \(\lambda_K\) is closer to the value of \(\lambda_P\) (note that, in the figure, \(\lambda_K \in (0;\hat{\lambda}_P]\), where \( \hat{\lambda}_P \) corresponds to the value of \(2.47\) in \(\log_{10}\) scale). This confirms our intuition that, when the initial value of the electricity price and the strike price are close and the two random variables follow the same dynamics, the RO has a negligible value since it is likely that it will be always at-the-money. Conversely, if the two random variables are not perfectly correlated or the two variables follow different dynamics, it is unlikely that at every point in time $P_t$ and $K_t$ coincide, and this adds value to the RO.  

\begin{figure}[H]
	\centering
	\begin{subfigure}[b]{0.45\textwidth}
	\includegraphics[width=\textwidth]{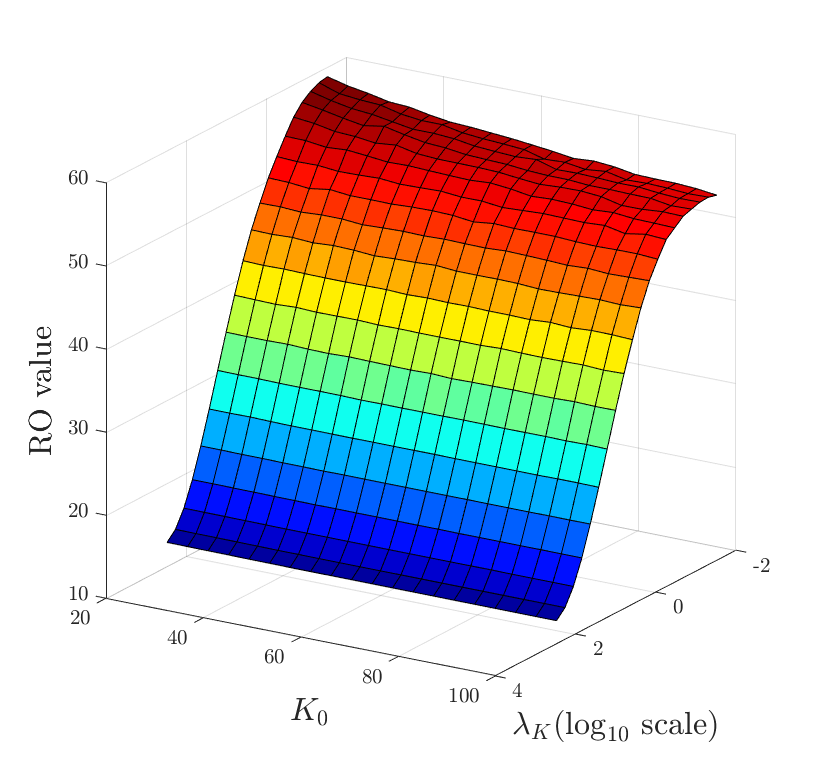}
		\end{subfigure}
		\quad
	\begin{subfigure}[b]{0.45\textwidth}
	\includegraphics[width=\textwidth]{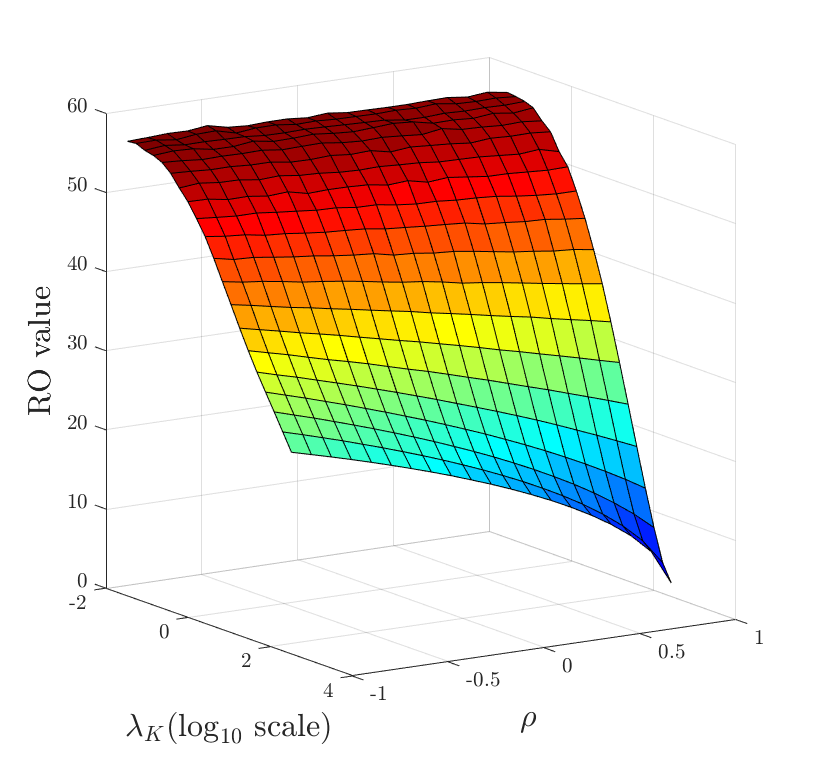}
		\end{subfigure} 
		\caption{Sensitivity analysis of the results using a yearly \(\lambda_K\) in the range \( (0;\hat{\lambda}_P]\) with an initial strike price $K_0$ in the range $[20;100]$, both with a yearly \(\sigma_K\) equal to the yearly \(\sigma_P\) (upper left panel) and with and a scaled down yearly \(\sigma_K\) (upper right panel), and with a correlation \(\rho\) in the range \( [-1;1]\) (bottom panel) (here \(\sigma_K = \sigma_P\)).  
		The RO value is expressed in \EUR/MWh.}
	\label{fig:2OU_sens_single_params}
\end{figure}

\begin{figure}[h!]
	\centering
	\begin{subfigure}[b]{0.45\textwidth}
	\includegraphics[width=\textwidth]{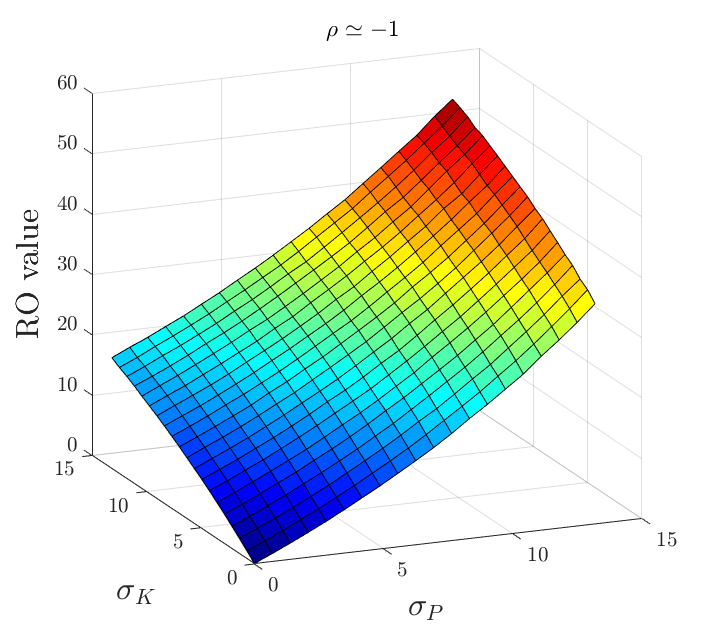}
		\end{subfigure}
	\quad
	\begin{subfigure}[b]{0.45\textwidth}
	\includegraphics[width=\textwidth]{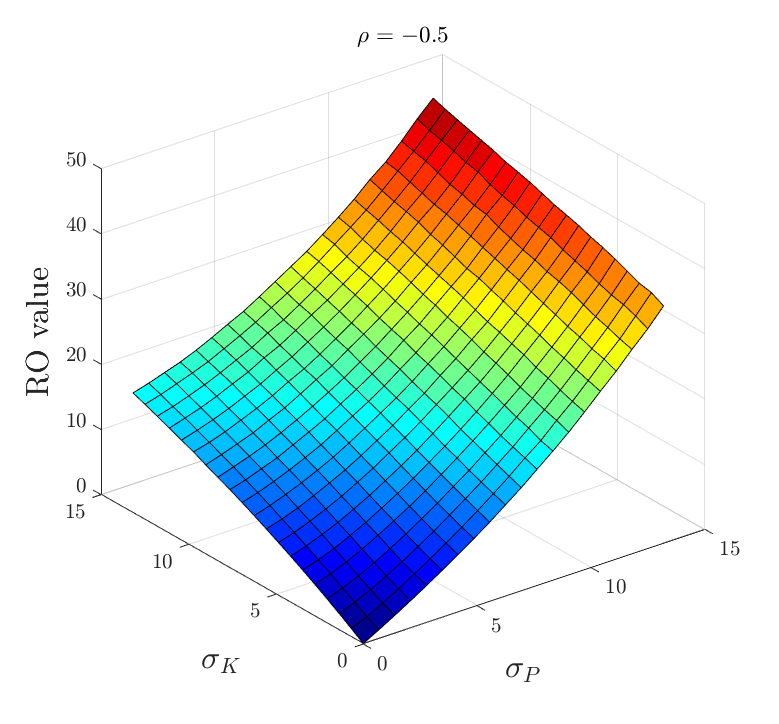}
		\end{subfigure} 
		\quad
	\begin{subfigure}[b]{0.45\textwidth}
	\includegraphics[width=\textwidth]{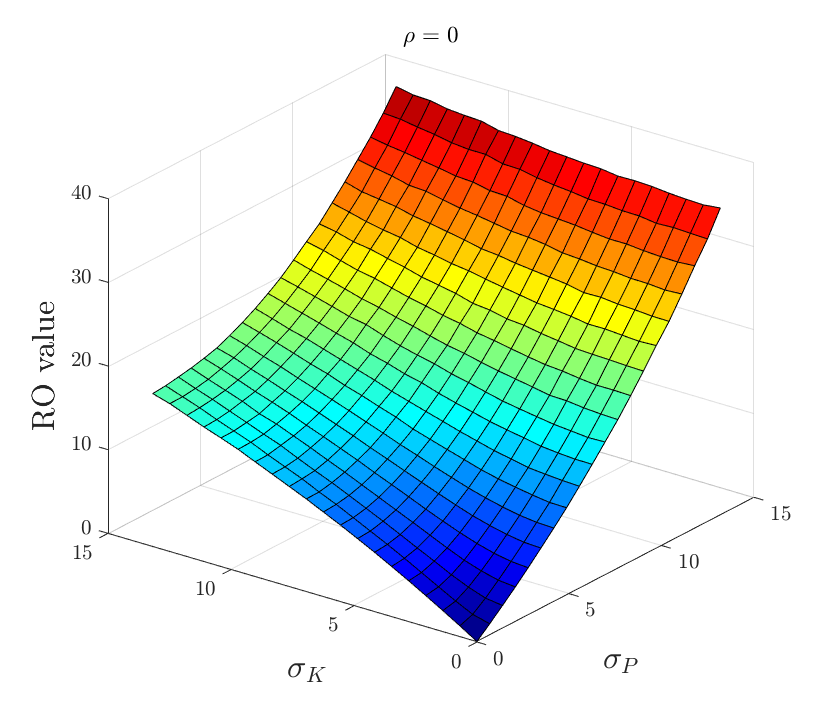}
		\end{subfigure} 
		\quad
	\begin{subfigure}[b]{0.45\textwidth}
	\includegraphics[width=\textwidth]{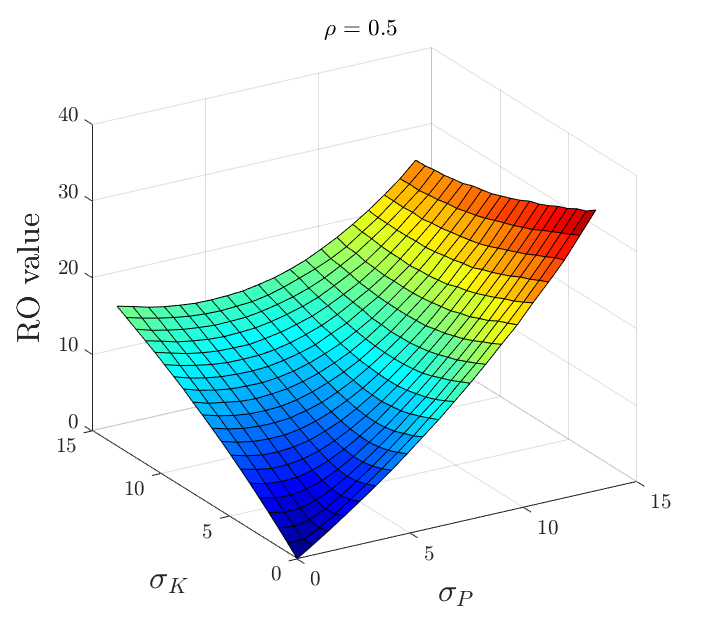}
		\end{subfigure}
	\quad
	\begin{subfigure}[b]{0.45\textwidth}
	\includegraphics[width=\textwidth]{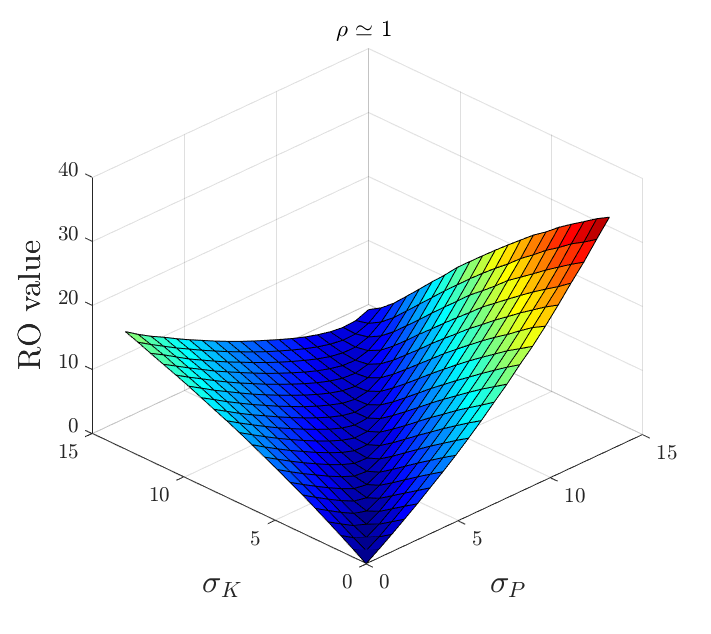}
		\end{subfigure}
		\caption{Sensitivity analysis of the RO value to a disjoint variation in the two volatilities, with a yearly \(\sigma_P\) and \(\sigma_K\) in the range \( (0;2\hat{\sigma}_P]\) (here \(\lambda_K = \lambda_P\)). In the different panels, we can see how a variation in the correlation coefficient $\rho$ affects the RO value: when the two processes are independent or negatively correlated, higher $\sigma_P$ and $\sigma_K$ result in a higher option value. However, when the correlation is positive (middle right and bottom panels), the higher the correlation, and the more the two volatilities are similar, the lower the value of the option. The RO value is expressed in \EUR/MWh.}
	\label{fig:2OU_sens_sigmas_rho}
\end{figure}

Finally, Figure \ref{fig:2OU_sens_sigmas_rho} shows the effect of a disjoint variation in the two volatilities, with a yearly \(\sigma_P\) and \(\sigma_K\) in the range \( (0;2\hat{\sigma}_P]\), for different levels of $\rho$ (in these graphs, \(\lambda_K\) is always equal to \(\lambda_P= \hat{\lambda}_P\)). When \(\rho \leq 0\), the RO price is always increasing in the electricity price volatility \(\sigma_P\) and in the strike price's one \(\sigma_K\). This is as expected, since volatility adds value to the call options. Instead, when \(\rho > 0\), the fact that the two processes move together can lower the aggregate risk, since the spread between the electricity price and the strike price reduces. This translates into a negative effect on the option value. The RO value is minimized when \(\sigma_P\) = \(\sigma_K\). In Figure \ref{fig:2OU_sens_sigmas_rho}, panel \(\rho = 0.5\), we can see that the option value is still positive; in the panel \(\rho = 1\), the RO value becomes null for $\sigma_P = \sigma_K$, since, as mentioned, if the two processes are perfectly positively correlated, the RO value coincides with its intrinsic value. Thus, there is a non-monotone effect of the volatility increase of one process, depending on the amount of volatility of the other process, and on the level of the correlation coefficient. The inflection is maximum when the two processes are perfectly positively correlated.

\section{Conclusions}\label{sec:conclusion}

In this paper, we have studied the value of the RO from a financial perspective.  The financial approach to option pricing relies on the assumption that a risk-neutral measure exists, which is equivalent to assume that markets are complete. This is not a problem for pricing options on electricity prices, as long as they can be written on electricity futures that can be rolled over the delivery period of the RO. Nevertheless, such an approach does require that RO markets are competitive and that forward markets are liquid. Our analysis provides a benchmark value for the RO, under the assumption that the market for the derivative is liquid enough to bring about competition.\footnote{Note that according to \cite{Bid, CramtonOckenfelsStoft} ROs are instruments that enhance competition in the electricity market.} 
Therefore, the simplified mathematical model that we proposed can be seen as a starting point in the analysis of ROs. We obtain semi-explicit formulae for the value of the RO, under a set of different assumptions with increasing realism and complexity. We move from simple integrals of call options written on GBMs to correlated mean reverting processes that capture the behavior of realistic electricity price time series, on the one hand, and complex rules for RO, on the other. Then, we  simulate the value of the Reliability Option through a real-market calibration of the parameters. 

The results evidence that the value of the RO moves consistently with expectations from option theory: a rise in the strike price lowers the RO value, which depends positively on the volatility of the electricity price, as well as on the volatility of the strike price itself. The mean reversion speed of the processes reduces the impact of the starting point, which was another expected result. However, when both the strike price and the electricity price are assumed to be stochastic processes, the value of the RO depends crucially on their correlation coefficient $\rho$. In particular, a positive correlation reduces the value of the RO. Moreover, there is a non-monotone impact of the volatility of one process, depending on the level of volatility of the other process and on a positive correlation. This is important when designing the rule of the RO. For instance, if the strike price is allowed to change with respect to a reference marginal cost, which is also believed to be the technology setting the system marginal price at the day ahead level, the two process clearly covariate positively. In this case, it is very likely that a RO has a very limited value, for every possible starting value of the state variables $P$ and $K$. More in general, our results show that a careful estimate of the parameters is needed to calculate the value of the ROs. \emph{Ceteris paribus}, the RO value will be lower as the volatility of the electricity price decreases, the strike price increases, the speed of mean reversion increases, the correlation of the electricity price with the strike price increases (if the strike price is allowed to change over time), and the two volatilities are closer. These are all factors that need to be taken into account when designing the market for ROs and calculating the equilibrium value. 
\bigskip

\noindent
\textbf{Acknowledgments}. 
Part of this work was done while the first author had a post-doc position under the financial grant ``Capacity markets and the evaluation of reliability options" from the Interdepartmental Centre for Energy Economics and Technology ``Giorgio Levi-Cases", University of Padua, which she gratefully acknowledges, and while the second author was completing her PhD at the University of Padua. The authors also wish to thank Paolo Falbo, Giorgio Ferrari, Michele Moretto, Filippo Petroni, Marco Piccirilli and Dimitrios Zormpas for interesting discussions, and all the participants to the 2nd Conference on the Mathematics of Energy Markets in Vienna, XLI A.M.A.S.E.S. Conference in Cagliari, Energy Finance Christmas Workshop 2017 in Cracow, XIX Workshop in Quantitative Finance in Rome, Energy Finance Italia 3 in Pescara, 3rd C.E.M.A. Conference in Rome, XLII A.M.A.S.E.S. Conference in Naples.


\appendix
\section*{Appendix} 
\setcounter{theorem}{0}
\renewcommand{\thetheorem}{A.\arabic{theorem}}
\setcounter{equation}{0}
\renewcommand{\theequation}{A.\arabic{equation}}
\setcounter{figure}{0}
\setcounter{table}{0}
\renewcommand{\thetable}{A.\arabic{table}}
\renewcommand{\thesubsection}{A.\arabic{subsection}}

\subsection{Proofs of pricing formulae}

\begin{proof}[Proof of Proposition~\ref{Prop:GBM}]
 The quantity $f(s,\omega)\colon =e^{-rs}Q (P_s(\omega) - K)^+$ in Equation \eqref{form_of_price} is non-negative. 

Then, if we set
\begin{equation}\label{expect}
A(K,P_{0},s)\colon =  e^{-r s}\mathbf{E}^{\mathbb{Q}}\left[\left.\left(P_s-K\right)^+\right|\mathcal{F}_{0}\right]\,,
\end{equation}
by Tonelli's theorem, we get
\begin{equation}\label{Call_option_after_fubini}
RO(T_1,T_2)=Q\int_{T_1}^{T_2}A(K,P_{0},s)ds\,.
\end{equation}
$A(K,P_{0},s)$ is clearly the price of a European call option with strike price $K$ and maturity $s$, thus Equation~\eqref{GBM_price_formula_BS} is simply obtained  with the Black~and~Scholes formula. 
\end{proof}

\begin{proof}[Proof of Proposition~\ref{Prop:corr_GBM}]
As in the proof of Proposition~\ref{Prop:GBM}, if we write
\begin{equation}\label{expect}
A(K_{0},P_{0},s)\colon =  e^{-rs}\mathbf{E}^{\mathbb{Q}}\left[\left.\left(P_s-K_s\right)^+\right|\mathcal{F}_{0}\right]\,,
\end{equation}
then, by Tonelli's theorem, we have
$$
RO(T_1,T_2) = Q \int_{T_1}^{T_2} A(K_{0},P_{0},s)ds\,.$$
Here, $A(K,P_{0},s)$ is the price of an exchange option between the electricity price $P$ and the strike price $K$, with maturity $s$, thus Equation~\eqref{price_corr_GBM} is simply obtained  with the Margrabe formula with dividends (see \cite{CarDur}).

\end{proof}

\begin{proof}[Proof of Proposition~\ref{Prop:OU}] 
As in the previous proofs, we write $A(K,P_{0},s)\colon =  e^{-r s}\mathbf{E}^{\mathbb{Q}}\left[\left.\left(P_s-K\right)^+\right|\mathcal{F}_{0}\right]$ and apply Tonelli's theorem to obtain
$$ RO(T_1,T_2)=Q\int_{T_1}^{T_2}A(K,P_{0},s)ds\,.$$
We now notice that
$$ A(K,P_0,s) = e^{-rs} \mathbf{E}^{\mathbb{Q}}\left[ \left.\left(f(s,s) - K\right)^+ \right|\mathcal{F}_{0}\right] $$
where $f(t,s)$, $t \in [0,s]$, has the dynamics
$$ df(t,s) = f(t,s) \sigma e^{-\lambda(s-t)}\ dW_t $$
The result then follows from the Black-Scholes formula with time-dependent (deterministic) volatility, which enters into the formula via the integral of its square, here equal to
$$ \int_0^s  \left( \sigma e^{-\lambda(s-t)} \right)^2 \ dt = \frac{\sigma^2}{2\lambda} (1 - e^{-2\lambda s}) = Var(s) $$ 

Equation~\eqref{OU_price} follows.
\end{proof}

\begin{proof}[Proof of Proposition~\ref{Prop:2OU}]
As before, we write $A(P_{0},K_0,s) \colon = e^{- r s}  \mathbf{E}^{\mathbb{Q}}\left[\left. ({\color{red} P_s} - {\color{red} K_s})^+\  \right| \mathcal{F}_{0} \right]$, we use Tonelli's theorem and obtain
$$ RO(T_1,T_2)=Q\int_{T_1}^{T_2}A(K,P_{0},K_0,s)ds \,.$$
 Now, as in the proof of Proposition 3.3, we now notice that
$$ A(K,P_0,s) = e^{-rs} \mathbf{E}^{\mathbb{Q}}\left[ \left.\left(f_P(s,s) - f_K(s,s)\right)^+ \right|\mathcal{F}_{0}\right] $$
where  

$f_i(t,s)$, $t \in [0,s]$, $I = P,K$, have the dynamics
\begin{eqnarray*}
df_P(t,s) & = & f_P(t,s) \sigma_x e^{-\lambda_x(s-t)}\ dW^1_t, \\
df_K(t,s) & = & f_K(t,s) \sigma_y e^{-\lambda_y(s-t)}\ dW^2_t, 
\end{eqnarray*}
The result then follows from the Margrabe formula with time-dependent (deterministic) volatilities, which now enters into the formula via the integral of the squared volatility of $f_p(\cdot,s)/f_K(\cdot,s)$ (see e.g. \cite{Deng}), here equal to
$$ \int_0^s \left( \sigma^2_x e^{-2\lambda_x(s-t)} + \sigma^2_y e^{-2\lambda_y(s-t)} - 2 \rho \sigma_x \sigma_y e^{-(\lambda_x + \lambda_y) (s-t)}\right) \ dt = \overline{Var}(s) $$

Equation \eqref{2OU_price} follows.

\end{proof}

\newpage

\end{document}